%% file: main.tex
\documentclass[11pt]{article}
\usepackage{defs}
\usepackage{todonotes}

\title{The Greedy Superstring Algorithm Achieves Ratio~2\\ for Strings of~Length~6 Already}
\author{
   Nikolai Chukhin
   \thanks{JetBrains Research.
   Email: \url{buyolitsez1951@gmail.com}}
   \and
   Alexander S. Kulikov
   \thanks{JetBrains Research.
   Email: \url{alexander.s.kulikov@gmail.com}}
   \and
   Ivan Mihajlin
   \thanks{JetBrains Research.
   Email: \url{ivmihajlin@gmail.com}}
   \and
   Alexander Smal
   \thanks{JetBrains Research.
   Email: \url{avsmal@gmail.com}}
}
\date{}

\let\problemname=\textrm
\let\problemabbr=\textrm

\begin{document}
	
\maketitle

\input{sections/abstract}
\input{sections/introduction}
\input{sections/preliminaries}
\input{sections/lower_bound_rho_k}
\input{sections/lower_bound_rho_3}

\input{sections/upper_bound_rho_3}

\section*{Acknowledgments}
We thank Thomas Preu for bringing a typo in~\Cref{sec:rhok-lower} to our attention and for helping us improve the~manuscript.

\section*{AI Disclosure}
The~authors used GPT-5.6 Sol Ultra and GPT-5.6 Sol Pro in~tandem throughout this project to~find and refine constructions, discuss proof ideas, and check arguments.
The~investigation of~$\rho_3$ began with a~model-generated construction giving a~lower bound strictly larger than $\frac53$.
Through repeated discussions and iterations with the models, the authors strengthened this construction and developed the upper-bound argument until the two bounds met at~$\rho_3=\frac95$.
The~same iterative process was used to~search for strong lower bounds for larger $k$, eventually yielding constructions that establish $\rho_k\ge2$ for every $k\ge6$.
The~models also assisted with proofreading and the literature review.
The~authors independently verified all arguments and assume full responsibility for content of the paper.

\bibliographystyle{alpha}
\bibliography{refs}

\end{document}

%% file: sections/abstract.tex
\begin{abstract}
	In~the Shortest Common Superstring (SCS) problem, 
	one is~given a~set of~strings and is~asked to~find
	a~string of~minimum length containing each of~the input strings as~a~substring.
	The greedy superstring conjecture states that the following 
	natural greedy algorithm has approximation ratio~$2$:
		while there is~more than one string, select the pair of~strings with the maximum overlap, merge them, and add the merged string back to~the set.
    The greedy algorithm works in linear time and is probably the simplest possible approximation algorithm for SCS. 
    If the conjecture holds, then the
    greedy algorithm also surpasses the approximation guarantees of the best known approximation algorithms.
    The conjecture is~open for $40$~years already and even the approximation ratio~$\rho_k$ in~the special case in~which
	input strings have length~$k$ has not yet been found: for all $k \ge 3$,
	$2-1/k \le \rho_k \le \min\{(k+1)/2, 3.396\}$.
	
	We~prove that already for strings of~length~$6$, the approximation ratio
	of~the greedy algorithm is~at~least~2: $\rho_k \ge 2$ for all $k \ge 6$.
	We~also show that $\rho_3=9/5$, thus completely characterizing the worst-case behavior of~the greedy algorithm for strings of~length~$3$.
\end{abstract}

%% file: sections/introduction.tex
\section{Greedy Superstring Conjecture}\label{sec:introduction}
In~the Shortest Common Superstring problem (SCS, also known~as Shortest Superstring Problem), one is~given a~set of~strings and is~asked to~find
a~string of~minimum length containing each of~the input strings as~a~substring. It~is a~mathematical model of~the genome assembly problem and has applications in~data compression and scheduling problems~\cite{GP14}. 
For this reason,
approximation algorithms for SCS are actively studied for the last $35$~years
(a~recent paper~\cite{Nikolaev24} provides a~list
of~all known bounds). 
The algorithms and their analysis are getting more and more involved, 
the currently best known approximation ratio is~$2.466$~\cite{EMV23}.
At~the same time, 
there~is an~extremely simple greedy algorithm whose 
approximation ratio is~conjectured~\cite{Storer88,TU88,Turner89,BJLTY91}
to~be equal to~$2$: 
	while there is~more than one string, select the pair of~strings with the maximum overlap, merge them, and add the merged string back to~the set.

This greedy algorithm is~simpler, both to~describe and to~implement, than all known approximation algorithms, many of~which are based on~finding an~optimum weight matching in~a~graph. This also makes~it faster than other approximation algorithms:
assuming that the alphabet size is~constant,
it~is possible to~implement the greedy algorithm so~that it~works in~linear time~\cite{Ukkonen90}. The greedy algorithm is~also known 
to~behave well in~practice~\cite{RBT04,Ma09}.
It~is not difficult to~see that the approximation ratio of~the greedy algorithm 
is~at~least~$2$: for an~instance $\{{\tt a}{\tt b}^{k-1}, {\tt b}^k, {\tt b}^{k-1}{\tt a}\}$, the optimum superstring ${\tt a}{\tt b}^k{\tt a}$ has length~$k+2$,
whereas the greedy algorithm may produce a~superstring ${\tt a}{\tt b}^{k-1}{\tt a}{\tt b}^k$ of~length $2k+1$. The currently best known upper bound is~$3.396$~\cite{EMV23}.

For various \NP{}-hard problems, the behavior of~a~natural greedy heuristic is~well understood and is~known for many decades.
Say, for Vertex Cover, an~algorithm ``while there~is an~uncovered edge, select the edge, add both of~its endpoints to~the cover, and remove all edges incident to them'' (folklore, see~\cite[Algorithm~1.2]{Vazirani01}) has approximation ratio $2$.
For Set Cover, an~algorithm
``while there are uncovered elements, select the set covering the largest number of~uncovered elements, add~it to~the solution, and mark its elements as~covered''
has approximation ratio $\ln n$~\cite{Johnson73}.
For List Scheduling, an~algorithm
``while there are unscheduled jobs, assign the next job to~the least loaded machine''
		has approximation ratio $2-1/\mu$ on~$\mu$ machines~\cite{Graham66}.

For this reason, it~is surprising (and perhaps depressing) that the  approximation ratio 
of~such a~remarkably simple greedy heuristic for SCS is~still unknown. 
Yet, here we~are: the greedy superstring conjecture is~open for 40~years already and even the behavior of~the greedy heuristic in~the special case
of~strings of~constant length is~still a~mystery. 
For each $k\ge2$, let $k$-\problemabbr{SCS} denote the restriction in~which every input string has length $k$, and let $\rho_k$ be~the approximation ratio of~the greedy algorithm on~this restriction.
The~restriction $2$-\problemabbr{SCS} is~in~\P{}, whereas $3$-\problemabbr{SCS} is~already \NP{}-hard~\cite{GMS80}.
The~following summarizes the bounds on~$\rho_k$ known before this work.
The classical example $\{{\tt a}{\tt b}^{k-1}, {\tt b}^k, {\tt b}^{k-1}{\tt a}\}$ shows that $\rho_k \ge \frac{2k+1}{k+2}=2-\frac{3}{k+2}$.
Cazaux and Rivals~\cite{CR18} proved that $\rho_k \le \frac{k+1}{2}$
and constructed a~series of~SCS instances showing that $\rho_k \ge 2-\frac{1}{k}$.
They conjectured that, in~fact, $\rho_k=2-\frac{1}{k}$.
Combining known results gives the following picture:
%
%
\[
	1.66 \le \rho_3 \le 2,\quad
	1.75 \le \rho_4 \le 2.5,\quad
	1.8 \le \rho_5 \le 3,\quad
	2-1/k \le \rho_k \le 3.396,\, \text{for all $k \ge 6$.}
\]

\paragraph{Our Contribution.} 
We~prove that already for strings of~length~$6$, the approximation ratio
of~the greedy algorithm is~at~least~2: $\rho_k \ge 2$ for all $k \ge 6$.
This disproves the conjecture by~Cazaux and Rivals~\cite{CR18}.
Then, we~show that $\rho_3=\frac{9}{5}$, thus improving known lower and upper bounds on~$\rho_3$
and providing the first exact value of~at~least one~$\rho_k$ in~the \NP-hard range $k \ge 3$.


%% file: sections/preliminaries.tex
\section{Preliminaries}\label{sec:preliminaries}

All strings that we~consider are finite.
For strings $u,v$, write $u\sqsubseteq v$ if~$u$ occurs in~$v$ as~a~substring.
For strings $p,q$, their \emph{overlap}, denoted by~$\operatorname{ov}(p,q)$, is~the largest $\ell\ge 0$ such that the length-$\ell$ suffix of~$p$ equals the length-$\ell$ prefix of~$q$.
Their \emph{overlap merge} $p\odot q$ is~obtained by~appending to~$p$ the part of~$q$ following this prefix, and hence
\[|p\odot q|=|p|+|q|-\operatorname{ov}(p,q).\]

For a string $s=s_0s_1\cdots s_{m-1}$ of~length $m\ge1$ and integers $1 \le k \le m$ and $0\le i \le m-k$, we denote by $\operatorname{substr}_k(s,i)$ a length-$k$ substring of~$s$ starting at position $i$,
\[
  \operatorname{substr}_k(s,i) = s_is_{i+1}\cdots s_{i+k-1}.
\]
The set of all length-$k$ substrings of $s$ is called the \emph{$k$-spectrum} of $s$ and is denoted by $\operatorname{Spec}_k(s)$,
\[
    \operatorname{Spec}_k(s) = \{\operatorname{substr}_k(s, i) : 0\le i \le |s|-k \}.
\]

A~string $w$ is~a~\emph{common superstring} of~a~set $S$ if~$s\sqsubseteq w$ for every $s\in S$.
The \problemname{Shortest Common Superstring} problem, abbreviated \problemabbr{SCS}, takes as~input a~finite nonempty set $S$ of~distinct strings over a~finite alphabet $\Sigma$ and asks for a~common superstring of~minimum length.
For an~integer $k\ge 1$, $k$-\problemabbr{SCS} is~the restriction of~\problemabbr{SCS} to~inputs $S\subseteq\Sigma^k$; equivalently, every input string has length exactly $k$.
Consequently, for an~instance $S$ of~$k$-\problemabbr{SCS}, we~have $s\not\sqsubseteq t$ for all distinct $s,t\in S$.
Throughout the paper, $n=|S|$ denotes the number of~input strings.

Starting from $S$, $\operatorname{GREEDY}$ repeatedly chooses an~ordered pair $p,q$ of~distinct strings with maximum overlap and replaces them by~$p\odot q$~\cite{TU88}.
Ties are resolved arbitrarily, and $\operatorname{GREEDY}$ continues to~merge pairs until one string remains.

Let $\operatorname{GREEDY}(S)$ be~the set of~all outputs obtainable in~this way, and let $\operatorname{OPT}(S)$ be~the minimum length of~a~common superstring of~$S$.
We~write
\[\rho_k=\sup_{\substack{\Sigma\ \mathrm{finite}, \\ \varnothing\ne S\subseteq\Sigma^k}}\ \sup_{g\in\operatorname{GREEDY}(S)}\frac{|g|}{\operatorname{OPT}(S)},\]
for the worst-case approximation ratio of the greedy algorithm on strings of length $k$.

For an instance $S$, we write $\|S\|=\sum_{s\in S}|s|$ for its total length. The \emph{compression} of~a~common superstring $w$ of $S$ is~$\|S\|-|w|$.
\begin{theorem}[\cite{TU88,KN24}]\label{thm:half-compression}
	Let $S$ be~a~finite nonempty set of~strings such that $x\not\sqsubseteq y$ for all distinct $x,y\in S$.
	For every $g\in\operatorname{GREEDY}(S)$,
	\[\|S\|-|g|\ge\frac12\bigl(\|S\|-\operatorname{OPT}(S)\bigr).\]
\end{theorem}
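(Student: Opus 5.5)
The plan is the classical Tarhio--Ukkonen argument: view $\operatorname{GREEDY}$ as choosing arcs in the overlap graph and compare the greedy compression with that of an optimal superstring through a charging argument on the Monge-type structure of overlaps. Since no string of $S$ is a substring of another, every common superstring $w$ corresponds to an ordering $s_1,\dots,s_n$ of $S$ (by first occurrence in $w$). Its compression is at most $\sum_i \operatorname{ov}(s_i,s_{i+1})$, and the optimum attains equality. Thus $\|S\|-\operatorname{OPT}(S)$ is the maximum weight of a Hamiltonian path in the complete digraph on $S$, where arc $(p,q)$ has weight $\operatorname{ov}(p,q)$. I would first check that $\operatorname{GREEDY}$ also selects a Hamiltonian path. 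When it merges $p'$ and $q'$ (merged strings), the overlap $\operatorname{ov}(p',q')$ equals the overlap of the last original string of $p'$ with the first original string of $q'$. This holds because, in substring-free instances, an overlap between merged strings that exceeded that of the end strings would contradict the greedy choice made earlier. So greedy compression is the weight of a path built by repeatedly adding the heaviest arc keeping outdegree and indegree at most one and creating no cycle.

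The key combinatorial fact is the Monge-type inequality. If $\operatorname{ov}(a,b)\ge\max\{\operatorname{ov}(a,d),\operatorname{ov}(c,b)\}$, then
\[
\operatorname{ov}(a,b)+\operatorname{ov}(c,d)\ge\operatorname{ov}(a,d)+\operatorname{ov}(c,b).
\]
It follows by looking at the suffix of $a$ (which equals the prefix of $b$) and noting that $c$'s suffix overlapping $b$ and $d$'s prefix overlapping $a$ give an overlap of $c$ with $d$ of length at least $\operatorname{ov}(c,b)+\operatorname{ov}(a,d)-\operatorname{ov}(a,b)$.

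With this, I would run a charging/exchange argument. Let $P$ be an optimal path and $G$ the greedy path. Process greedy's arcs in order. When greedy picks $(a,b)$, at most two arcs of the current target structure become infeasible: the arcs of $P$ (modified) leaving $a$ and entering $b$, plus possibly one arc that would close a cycle. Using the inequality, replace them so that the weight lost is at most twice $\operatorname{ov}(a,b)$. Summing over all steps gives $w(P)\le 2w(G)$, which is the claim.

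The main obstacle is making the exchange bookkeeping rigorous. One must keep an invariant structure (a Hamiltonian path or a cycle cover consistent with the greedy arcs chosen so far) and handle the cycle-closing case. I would first try the simpler invariant ``$P$ plus greedy arcs, degree-bounded'', showing each greedy arc displaces at most two arcs of weight at most $\operatorname{ov}(a,b)$ each, since both displaced arcs were available when greedy chose $(a,b)$. Cycle arcs only cost a further additive term, which I would absorb by the Monge swap rather than by deletion.
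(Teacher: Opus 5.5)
Your ingredients are the right ones: the Hamiltonian-path reformulation, greedy as the arc-greedy on the overlap graph, the Monge inequality, and an exchange argument. That is the classical Tarhio--Ukkonen route; the paper itself gives no proof and only cites this result. But the exchange step, which is the whole proof, is left as a plan, and some of the options you float would fail.

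\begin{itemize}
\item \textbf{Degree-bounded or cycle-cover invariant.} If the target structure only has in- and out-degree at most one, it may contain cycles. Then an arc $(a,a')$ displaced by the greedy arc $(a,b)$ can close a cycle together with greedy arcs. Such an arc was never a feasible greedy choice, so $\operatorname{ov}(a,a')\le\operatorname{ov}(a,b)$ is unjustified.
\item \textbf{Final structure.} At the end, such a structure may be $G$ plus a closing arc, so the sum does not telescope to $w(P)\le 2w(G)$.
\item \textbf{Accounting.} ``Weight lost at most $2\operatorname{ov}(a,b)$'' must mean deleted weight minus inserted non-greedy weight. Read as the net decrease of the target, summing only gives $w(P)\le 3w(G)$.
\end{itemize}

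What works is the following. Keep a Hamiltonian path $T$, initially $P$, containing all greedy arcs chosen so far. Since $T$ is acyclic, each non-greedy arc of $T$ was a feasible greedy choice. So when greedy picks $(a,b)\notin T$, every non-greedy arc of $T$ has overlap at most $\omega=\operatorname{ov}(a,b)$; if $(a,b)\in T$, nothing changes. Let $(a,a')$ and $(b',b)$ be the arcs of $T$ leaving $a$ and entering $b$, omitting whichever does not exist.

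\begin{itemize}
\item \textbf{$a$ precedes $b$ in $T$.} Delete both arcs, insert $(a,b)$, and reattach the segment $a'\cdots b'$ at an end of the path. The net loss is at most $2\omega-\omega=\omega$.
\item \textbf{$b$ precedes $a$ in $T$.} Delete both arcs and insert $(a,b)$ and $(b',a')$. Your inequality with $c=b'$ and $d=a'$ (distinct, since they lie in different segments) shows nothing is lost so far. However, $b\cdots a$ together with $(a,b)$ now forms a cycle. Since $a$ and $b$ lie in different current strings, this cycle contains a non-greedy arc $(u,v)$ of $T$, of weight at most $\omega$. Delete it and attach the resulting path at an end.
\end{itemize}

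So the Monge swap pays for the two displaced arcs, and feasibility pays for breaking the cycle; you need both, not one instead of the other. Every step costs at most $\omega$, and the final $T$ is exactly $G$, so $w(G)\ge w(P)-w(G)$.

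For the reduction to the overlap graph, carry substring-freeness of the current strings through the induction. Then $\operatorname{ov}(p\odot q,t)\le|q|$, since otherwise $q\sqsubseteq t$, so it equals $\operatorname{ov}(q,t)$. Also $t\sqsubseteq p\odot q$ would give $\operatorname{ov}(p,t)>\operatorname{ov}(p,q)$, contradicting the greedy choice.
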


For a~string $x$ produced during an~execution of the greedy algorithm from $S$, let $\operatorname{left}(x),\operatorname{right}(x)\in S$ be~its leftmost and rightmost input strings.
\begin{lemma}[\cite{BJLTY91}]\label{lem:endpoint-inheritance}
	For an~instance $S$ of~$k$-\problemabbr{SCS}, at~any point during an~execution of~the greedy algorithm, any two distinct strings $x,y$ currently present satisfy $x\not\sqsubseteq y$ and
	\[\operatorname{ov}(x,y)=\operatorname{ov}(\operatorname{right}(x),\operatorname{left}(y))\le k-1.\]
\end{lemma}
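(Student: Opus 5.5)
We will prove both claims by induction on the number of merges performed so far, tracking for every current string $x$ its leftmost and rightmost input strings $\operatorname{left}(x)$ and $\operatorname{right}(x)$. The invariant to maintain is: the current strings are pairwise non-nested, and each current string $x$ has $\operatorname{left}(x)$ as a prefix and $\operatorname{right}(x)$ as a suffix, both of length exactly $k$. At the start every current string is an input string of length $k$, and $S\subseteq\Sigma^k$ consists of distinct strings. Hence no input is a substring of another, and all overlaps are at most $k-1$, since two distinct strings of length $k$ cannot overlap in $k$ characters.

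\textbf{Step 1: the overlap identity at a fixed moment.} Let $x,y$ be distinct current strings. Because $x\not\sqsubseteq y$ and $y\not\sqsubseteq x$, we have $\operatorname{ov}(x,y)<\min(|x|,|y|)$. The key claim is $\operatorname{ov}(x,y)\le k-1$. Given this, the overlap is a suffix of $\operatorname{right}(x)$ and a prefix of $\operatorname{left}(y)$. Conversely, any overlap between $\operatorname{right}(x)$ and $\operatorname{left}(y)$ is also an overlap between $x$ and $y$, and it is at most $k-1$ because these two input strings are distinct or, if equal, it is still bounded by $k-1$. Together these give the equality $\operatorname{ov}(x,y)=\operatorname{ov}(\operatorname{right}(x),\operatorname{left}(y))$.

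\textbf{Step 2: overlaps stay below $k$ (main obstacle).} Suppose $\operatorname{ov}(x,y)=\ell\ge k$. Then the length-$\ell$ suffix of $x$ contains $\operatorname{right}(x)$ as its suffix, so $\operatorname{right}(x)$ occurs in $y$. When $x$ itself is an input string this already gives $\operatorname{right}(x)=x\sqsubseteq y$, a contradiction. In general we must rule out that an input string is a substring of a merged string $y$ other than those merged into $y$. For this we use a stronger invariant: every input string $s$ occurs as a substring of exactly one current string, namely the one it was merged into. The hard part is preserving this invariant under a greedy merge $p\odot q$. Since $\operatorname{ov}(p,q)\le k-1$ by induction, a new occurrence of some input $s$ in $p\odot q$ that straddles the junction would have length $k$ and lie inside $\operatorname{right}(p)\odot\operatorname{left}(q)$. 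Such a $t=s$ would have overlaps with $\operatorname{right}(p)$ and $\operatorname{left}(q)$ summing to more than $\operatorname{ov}(p,q)$, so one of them strictly exceeds $\operatorname{ov}(p,q)$. If $s$ is still the left or right end of some current string distinct from $p,q$, this gives a current pair with a larger overlap, contradicting greedy maximality. If $s$ is interior to a current string, the induction invariant already pins it inside that string, and we must show this cannot coexist with the straddling occurrence. We expect this case analysis to be the main difficulty; we will handle it by induction on the order of merges, looking at the merge that first made $s$ interior.

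\textbf{Step 3: conclude.} With the invariant in hand, an overlap $\ell\ge k$ would place $\operatorname{right}(x)$ inside $y$, contradicting uniqueness of occurrences. This gives $\operatorname{ov}(x,y)\le k-1$, and Step 1 then yields the identity. Non-nestedness follows from the same invariant: if $x\sqsubseteq y$, then $\operatorname{left}(x)$ would occur in both $x$ and $y$, which uniqueness forbids.
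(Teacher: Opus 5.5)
\textbf{The invariant is false.} Your Steps 2 and 3 rest on the invariant that every input string occurs as a substring of exactly one current string. This invariant does not hold. The case you postpone, an input that is interior to one current string and also straddles the junction of $p\odot q$, actually occurs. Take $k=3$ and $S=\{\texttt{DAB},\texttt{ABC},\texttt{BCE},\texttt{XAB},\texttt{BCY}\}$.
\begin{enumerate}
\item Resolving ties, greedy may merge $\texttt{DAB},\texttt{ABC},\texttt{BCE}$ into $\texttt{DABCE}$ by two overlap-two merges.
\item Among $\texttt{DABCE},\texttt{XAB},\texttt{BCY}$ the unique maximum overlap is then $\operatorname{ov}(\texttt{XAB},\texttt{BCY})=1$.
\item Merging gives $\texttt{XABCY}$, which contains $\texttt{ABC}$ even though $\texttt{ABC}$ also lies inside $\texttt{DABCE}$.
\end{enumerate}
So uniqueness of occurrences cannot be established. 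Your derivations of $\operatorname{ov}(x,y)\le k-1$ and of non-nestedness both depend on it, so they fall with it.

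A smaller error: under the paper's definition $\operatorname{ov}(s,s)=k$, so your remark that equal endpoints still give overlap at most $k-1$ is wrong. What you actually need is $\operatorname{right}(x)\ne\operatorname{left}(y)$. This holds because distinct current strings are merged from disjoint sets of inputs.

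\textbf{How to repair it.} The paper only cites this lemma. The standard argument does not track occurrences of inputs at all. Instead, induct on merges with the invariant ``current strings are pairwise non-nested and $\operatorname{ov}(x,y)=\operatorname{ov}(\operatorname{right}(x),\operatorname{left}(y))$''. This holds for any substring-free input, independently of $k$. Let greedy merge $p,q$ with maximum overlap $\ell$, put $w=p\odot q$, and let $z$ be another current string.
\begin{enumerate}
\item Non-nestedness. First, $w\not\sqsubseteq z$, since $p\sqsubseteq w$. Next, suppose $z$ occurs in $w$ inside neither $p$ nor $q$. Then that occurrence starts before $q$ and ends after $p$. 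So a proper suffix of $z$ of length greater than $\ell$ is a prefix of $q$, giving $\operatorname{ov}(z,q)>\ell$ and contradicting maximality. This is your straddling argument, applied to current strings rather than to inputs.
\item The identity. If $\operatorname{ov}(w,z)\ge|q|$, then $q\sqsubseteq z$, which is impossible. So the overlap lies inside the suffix $q$ of $w$, and $\operatorname{ov}(w,z)=\operatorname{ov}(q,z)$. Symmetrically, $\operatorname{ov}(z,w)=\operatorname{ov}(z,p)$. Since $\operatorname{right}(w)=\operatorname{right}(q)$ and $\operatorname{left}(w)=\operatorname{left}(p)$, the identity propagates.
\item The bound. $\operatorname{ov}(x,y)\le k-1$ then follows because $\operatorname{right}(x)$ and $\operatorname{left}(y)$ are distinct strings of length $k$.
\end{enumerate}
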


For a~directed graph $H$, write $V(H)$ and $E(H)$ for its vertex and edge sets, respectively, and write $v(H)=|V(H)|$ and $e(H)=|E(H)|$ for their cardinalities.
Its \emph{weak components} are the connected components of~the undirected graph obtained by~dropping the directions of~the edges.
The graph $H$ is~\emph{weakly connected} if~it~has exactly one weak component.
A~\emph{trail} in~a~directed graph is~a~directed walk in~which no~edge is~repeated.
A~\emph{trail cover} of $H$ is~an~edge-disjoint family of~directed trails whose union is~$E(H)$.
Write $\tau(H)$ for the minimum size of~such a~cover.

For an~instance $S\subseteq\Sigma^k$ of~$k$-\problemabbr{SCS}, let $\mathcal G(S)$ be~a~directed graph whose vertices are the length-$(k-1)$ prefixes and suffixes of~the strings in~$S$.
For every string $s = s_0\cdots s_{k-1}\in S$, the graph $\mathcal G(S)$ contains the edge
\[s_0\cdots s_{k-2}\longrightarrow s_1\cdots s_{k-1}.\]
Equivalently, $\mathcal G(S)$ is~the subgraph of~the de Bruijn graph~\cite{Bruijn46} on~the strings of~length $k-1$ over the alphabet~$\Sigma$ consisting of~the edges corresponding to~the strings of~$S$.
\begin{lemma}[\cite{Pevzner89}]\label{lem:eulerian-optimum}
If~$\mathcal{G}(S)$ is~balanced and weakly connected, then
\[
    \operatorname{OPT}(S)=n+k-1.
\]
\end{lemma}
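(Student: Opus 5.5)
We prove both inequalities $\operatorname{OPT}(S)\ge n+k-1$ and $\operatorname{OPT}(S)\le n+k-1$.

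\emph{Lower bound.} Let $w$ be any common superstring of $S$. Every string of $S$ is a distinct length-$k$ substring of $w$, so $S\subseteq\operatorname{Spec}_k(w)$. Hence $n\le|\operatorname{Spec}_k(w)|\le |w|-k+1$, that is, $|w|\ge n+k-1$. This direction needs neither balancedness nor connectivity.

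\emph{Upper bound.} We construct a superstring of length exactly $n+k-1$ from an Eulerian circuit of $\mathcal G(S)$.
\begin{itemize}
\item Each $s\in S$ gives exactly one edge of $\mathcal G(S)$, and distinct strings give distinct edges. Indeed, the edge $s_0\cdots s_{k-2}\to s_1\cdots s_{k-1}$ determines $s$, because its endpoints overlap in $k-2$ characters and together they recover all of $s_0\cdots s_{k-1}$. So $e(\mathcal G(S))=n$.
\item Every vertex of $\mathcal G(S)$ is an endpoint of some edge, by the definition of the vertex set, so there are no isolated vertices.
\item A balanced, weakly connected digraph without isolated vertices has an Eulerian circuit. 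Balance ensures that each weak component is strongly connected, and the classical Euler theorem then applies.
\end{itemize}

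Take such a circuit $v_0\to v_1\to\cdots\to v_n=v_0$ with edges $e_1,\dots,e_n$. Along an edge $u\to v$ of the de Bruijn graph, $v$ equals $u$ with its first character dropped and one character appended. Define
\[
w = v_0\,c_1c_2\cdots c_n,
\]
where $c_i$ is the last character of $v_i$. Then $|w|=(k-1)+n$.

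By induction on $i$, the length-$(k-1)$ substring of $w$ starting at position $i$ equals $v_i$, and the length-$k$ substring starting at position $i-1$ equals the string labelling $e_i$. Since the circuit uses every edge, every $s\in S$ occurs in $w$. Therefore $\operatorname{OPT}(S)\le n+k-1$.

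The only delicate point is the passage from ``balanced and weakly connected'' to the existence of an Eulerian circuit. We handle it by invoking Euler's theorem, using the fact that a balanced digraph has every edge lying on a cycle, so weak connectivity implies strong connectivity. If one prefers to avoid this argument, an Eulerian trail (not necessarily closed) through all $n$ edges suffices for the same counting.
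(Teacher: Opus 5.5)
Your proof is correct, and it is the standard argument behind this result: the paper gives no proof of its own and simply cites Pevzner. The lower bound comes from $S\subseteq\operatorname{Spec}_k(w)$ with $|\operatorname{Spec}_k(w)|\le|w|-k+1$, and the upper bound comes from spelling an Eulerian circuit of $\mathcal G(S)$, whose existence follows from balance plus weak connectivity exactly as you argue.
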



%% file: sections/lower_bound_rho_k.tex
\section{Lower bound for \texorpdfstring{$\rho_k$}{rhok} when \texorpdfstring{$k\ge 6$}{k >= 6}}\label{sec:rhok-lower}

\begin{theorem}\label{thm:rhok-lower}
	For all $k\ge 6$, $\rho_k\ge 2$.
\end{theorem}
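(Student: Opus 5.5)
Since $\rho_k$ is a supremum, it suffices to exhibit, for each fixed $k\ge6$ and each $\varepsilon>0$, an instance $S\subseteq\Sigma^k$ and a greedy tie-breaking order whose output $g$ satisfies $|g|/\operatorname{OPT}(S)\ge 2-\varepsilon$. The plan is a family $S_m$ indexed by a size parameter $m$. In this family $\mathcal G(S_m)$ is balanced and weakly connected, so that \Cref{lem:eulerian-optimum} gives $\operatorname{OPT}(S_m)=n+k-1$ with $n=|S_m|$. At the same time, a greedy run should achieve total compression only about $n(k-1)/2+O(1)$, roughly half of the optimal compression $n(k-1)-(k-1)$. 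Then $|g|=nk-\text{compression}\approx n(k+1)/2$ while $\operatorname{OPT}\approx n$. That ratio is only $(k+1)/2$, so pure half-compression cannot be the mechanism for $k\ge6$; I must restate the target. We need $|g|\approx 2n$, i.e.\ greedy compression about $n(k-2)$, so greedy loses roughly one extra character per string compared with the optimum. The design goal is therefore: greedy merges every string with overlap $k-2$ (instead of $k-1$) on average, or alternatively most merges have overlap $k-1$ but a constant fraction are forced down to much smaller overlaps.

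The construction I would try is built from repeated gadgets. Each gadget is a short de Bruijn path, and the gadgets are chained into one Eulerian cycle. Greedy, at its first stage (overlap $k-1$), is lured by ties into merging edges along ``wrong'' adjacencies: at every vertex of in/out-degree $2$ it pairs the in-edges with the out-edges so as to close small cycles or to create paths whose endpoints later overlap by very little. Concretely, I would take strings of the form $\mathtt{a}^i\mathtt{b}\mathtt{a}^{k-1-i}$ and variants with a few distinct marker letters. Using fresh letters per gadget keeps overlaps between gadgets controlled, since by \Cref{lem:endpoint-inheritance} overlaps between merged strings are determined only by their end input strings. The value $k\ge6$ should be exactly what leaves room for enough distinct marker patterns inside a length-$k$ window that the bad pairings all occur at overlap $k-1$ ties while the good continuations are blocked.

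The key steps are as follows. (1) Define $S_m$ explicitly and verify that $\mathcal G(S_m)$ is balanced and weakly connected, giving OPT. (2) Describe the greedy run phase by phase, for overlaps $k-1,k-2,\dots$: at each phase list which merges are performed, and check via \Cref{lem:endpoint-inheritance} that no larger overlap is available at that time, so that each choice is a legitimate greedy (possibly tie-broken) step. (3) Sum the overlaps to get $|g|\ge 2n-O(k)$, and let $m\to\infty$. The main obstacle is step (2): guaranteeing that after the wrong first-phase merges, the residual strings have small pairwise overlaps — ideally only a constant number of characters of overlap each — so that later phases cannot recover the lost compression. I would first search small $k=6$ instances by computer for a periodic gadget achieving per-gadget loss of one character, then generalize by padding the gadget with a long run of $\mathtt{a}$'s to reach all $k\ge6$.
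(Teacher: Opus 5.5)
Your proposal is a research plan rather than a proof. No instance is ever fixed, steps (1)--(3) are not carried out, and the decisive part is deferred to a computer search. The entire content of the theorem is an explicit family of instances together with a legal greedy run on it, so this is a genuine gap.

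Moreover, the one concrete family you name cannot reach ratio~$2$. The strings $\mathtt{a}^i\mathtt{b}\mathtt{a}^{k-1-i}$ are exactly the cyclic $k$-spectrum of $\mathtt{a}^{k-1}\mathtt{b}$, i.e., a single $k$-edge cycle of the de~Bruijn graph. A cycle of $m$ edges that greedy closes on its own is spelled by a string of length $m+k-1$. Take a bouquet of such cycles through the hub $\mathtt{a}^{k-1}$, with markers $\mathtt{b}_1,\dots,\mathtt{b}_t$. Greedy may pair $\mathtt{b}_i\mathtt{a}^{k-1}$ with $\mathtt{a}^{k-1}\mathtt{b}_i$ and produce $\mathtt{a}^{k-2}\mathtt{b}_i\mathtt{a}^{k-1}\mathtt{b}_i$ with zero mutual overlaps. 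That gives total length $t(2k-1)$ against $\operatorname{OPT}=tk+k-1$, so the ratio tends to $2-1/k$. This is only the previously known lower bound, which the theorem is meant to beat.

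The missing idea is that greedy must be made to close cycles with at most $k-1$ edges. These are short de~Bruijn cycles arising as cyclic $k$-spectra of strings shorter than $k$, i.e., periodic $k$-mers. They must hang off an internal vertex of a longer cycle, so that the optimal Eulerian tour absorbs them for free while greedy cannot splice them in.

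The paper does exactly this. For $k=6$, the $3$-cycle of $q=\texttt{ABX}$ meets the $6$-cycle of $p=\texttt{ABAABX}$ at $\texttt{ABXAB}$. Greedy closes the two cycles separately, rooted at $\texttt{BXABX}$ and $\texttt{XABAA}$. This yields strings of lengths $8$ and $11$ with overlap $1$, so $18$ characters for $9$ strings. The long cycle alone is below ratio~$2$ and the short one above, and they balance to exactly $2$.

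Blocks with fresh letters $\texttt{X}_i$ all pass through the common vertex $\texttt{ABAAB}$. Hence $\operatorname{OPT}\le 9t+5$, while greedy pays $18t$.

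For $k\ge7$ the paper pads with fresh letters $y=\texttt{Y}_1\cdots\texttt{Y}_{k-6}$ rather than a run of $\mathtt{a}$'s. Padding with a repeated letter can create new coincidences among cyclic substrings and new long overlaps, so that variant would need its own verification.
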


We first establish a~construction that shows $\rho_6\ge 2$, and then we extend this construction to~all $k\ge 7$.
The construction is based on the cyclic spectra of strings.

\subsection{Cyclic spectra of~strings}
For a string $s = s_0s_1\dotsc s_{m - 1}$ of~length $m\ge1$ and integers $k\ge 1$ and $0\le i < m$, we denote by $\operatorname{csubstr}_k(s,i)$ the length-$k$ \emph{cyclic substring} of $s$ starting at position $i$: 
\[ 
    \operatorname{csubstr}_k(s,i) = s_i s_{i+1}\cdots s_{i+k-1}, 
\] 
where the indices are taken modulo $m$, that is, $s_j=s_{j\bmod m}$ for every $j\ge 0$.
The set of all length-$k$ cyclic substrings of $s$ is called the \emph{cyclic $k$-spectrum} of $s$ and is denoted by $\operatorname{CSpec}_k(s)$: 
\[ 
    \operatorname{CSpec}_k(s) = \{\operatorname{csubstr}_k(s,i) : 0\le i < |s|\}. 
\]

For some fixed positive integer $k$, let $u$~be a~string of length $n$ and assume that $\operatorname{CSpec}_k(u)$ consists of~$n$ different strings. Let $\ell=n + k - 1$, and for $0 \le i < n$, define a~string $w_i = \operatorname{csubstr}_{\ell}(u,i)$. Clearly, $w_0, \dotsc, w_{n-1}$ are cyclic shifts of~each other.
A~simple but crucial observation is~that $w_i \in \operatorname{GREEDY}(\operatorname{CSpec}_k(u))$ for every $0 \le i < n$.
This~is best illustrated with
an~example. Let $u=\texttt{ABCA}$ and $k=5$. Then, 
\[\operatorname{CSpec}_k(u)=\{\texttt{ABCAA},\texttt{BCAAB},\texttt{CAABC},\texttt{AABCA}\}.\]
Clearly, every two consecutive strings here, including the last one and the first one, have an~overlap of~size~$k-1 = 4$. Hence, starting from any string in~this sequence, the greedy algorithm may merge the strings going (cyclically) from left to~right to~obtain any of~the following superstrings $w_0, w_1, w_2, w_3$ of length $\ell = n+k-1 = 8$:
\[\texttt{ABCAABCA},\ \texttt{BCAABCAA},\ \texttt{CAABCAAB},\ \texttt{AABCAABC}.\]

\subsection{The case of $k=6$}
We~start by~covering the case $k=6$
as~it~demonstrates the main ideas in~a~clean way.
We~then generalize the construction to~every $k \ge 6$.

Consider two strings
$p=\texttt{ABAABX}$ and $q=\texttt{ABX}$
and their cyclic $6$-spectra:
\begin{align*}
	\operatorname{CSpec}_6(p)&=
		\{
			\texttt{ABAABX}, 
			\texttt{BAABXA}, 
			\texttt{AABXAB}, 
			\texttt{ABXABA}, 
			\texttt{BXABAA}, 
			\texttt{XABAAB}
		\},
	\\
	\operatorname{CSpec}_6(q)&=
		\{
			\texttt{ABXABX}, 
			\texttt{BXABXA},
			\texttt{XABXAB}
		\}.
\end{align*}
It~is straightforward to~check that all nine strings are distinct; hence,
$S=\operatorname{CSpec}_6(p) \cup \operatorname{CSpec}_6(q)$ has size~9. Moreover, $S$~is a~(linear) $6$-spectrum of~the following string of~length~$14$:
\[s^*=\texttt{ABAABXABXABAAB}.\]
It~is not difficult to~check this by~hand, but it~is particularly instructive
to~look~at the graph~$\mathcal{G}(S)$. 

\begin{center}
	\begin{tikzpicture}[>=latex, yscale=.4]
		\tikzstyle{v}=[rectangle, draw, inner sep=.8mm]
		
		\node[v] (ABXAB) at (0, 0)       {\texttt{ABXAB}};
		\begin{scope}[shift={(2, 0)}]
			\node[v] (BXABX) at (120:2)  {\texttt{BXABX}}; 
			\node[v] (XABXA) at (-120:2) {\texttt{XABXA}}; 
		\end{scope}
		\begin{scope}[shift={(-2, 0)}]
			\node[v] (BXABA) at (60:2)  {\texttt{BXABA}}; 
			\node[v] (XABAA) at (120:2) {\texttt{XABAA}}; 
			\node[v] (ABAAB) at (180:2) {\texttt{ABAAB}};
			\node[v] (BAABX) at (240:2) {\texttt{BAABX}}; 
			\node[v] (AABXA) at (300:2) {\texttt{AABXA}}; 
		\end{scope}
		
		\foreach \f/\t in {ABXAB/BXABX, BXABX/XABXA, XABXA/ABXAB,
			ABXAB/BXABA, BXABA/XABAA, XABAA/ABAAB, ABAAB/BAABX, BAABX/AABXA, AABXA/ABXAB}
			\draw[->] (\f) -- (\t);
	\end{tikzpicture}
\end{center}
It~is clearly Eulerian and using the node \texttt{ABAAB} as a~starting point, one can spell the superstring~$s^*$.
Now, it~is intuitive that, when starting from \texttt{ABAAB} and moving along the left cycle, one needs to~traverse the right cycle after arriving at~the node
\texttt{ABXAB}, and only then continue traversing the left cycle. The greedy algorithm, however, does not have this intuition. Instead, it~may merge the strings from $\operatorname{CSpec}_6(p)$ to~\texttt{XABAABXABAA} and merge the strings from $\operatorname{CSpec}_6(q)$ to~\texttt{BXABXABX}. The resulting two strings of~length~$11$ and~$8$ have an~overlap of~length~1, so the final superstring has length~$18$:
\[\texttt{BXABXABXABAABXABAA}.\]

Again, the fact that the string~$s^*$ has an~overlap of~length~$5$ with itself allows~us to~iterate this construction as~follows. For an~integer parameter~$t\ge1$ and for every $i \in [t]$, let
$p_i=\texttt{ABAABX}_i$ and $q_i=\texttt{ABX}_i$.
That~is, the strings $p_i$ and $q_i$ contain a~fresh symbol $\texttt{X}_i$. Then,
let
\[S_t = \bigcup_{i \in [t]} \operatorname{CSpec}_6(p_i) \cup \operatorname{CSpec}_6(q_i).\]
As~discussed above, each block has a~superstring of~length~$14$ starting and ending with \texttt{ABAAB}. All these strings can be~merged into a~superstring for~$S_t$ of~length $9t+5$. At~the same time, the greedy algorithm may produce 
a~superstring of~length~$18t$. Hence, $\rho_6 \ge \frac{18t}{9t+5}$ meaning that $\rho_6 \ge 2$.

\subsection{The case of $k\ge7$}
Fix any $k\ge 7$, let $d = k - 6$ and choose a string $y = \texttt{Y}_1\texttt{Y}_2\cdots\texttt{Y}_d$ of $d$ fresh symbols.
Consider the following two strings
\[
    p = \texttt{ABAABX}y, \quad q = \texttt{ABX}y.
\]
It is not difficult to see that the same structure as in the case of $k=6$ is preserved: $|p| = k$, $|q|=k-3$, all the strings in their cyclic $k$-spectra are distinct, and hence $S=\operatorname{CSpec}_k(p) \cup \operatorname{CSpec}_k(q)$ has size~$2k-3$. The graph $\mathcal{G}(S)$ also has two cycles that are connected through the vertex $\texttt{ABX}y\texttt{AB}$ (adding the string~$y$ makes the cycles longer while preserving the global structure of the graph). Therefore, there is an optimal common superstring of~$S$ of~length~$3k-4$,
\[
    s^*=y\texttt{ABAABX}y\texttt{ABX}y\texttt{ABAAB}.
\]

The greedy algorithm may merge the strings from $\operatorname{CSpec}_k(p)$ to~$\texttt{X}y\texttt{ABAABX}y\texttt{ABAA}$ and merge the strings from $\operatorname{CSpec}_k(q)$ to~$\texttt{BX}y\texttt{ABX}y\texttt{ABX}$. The resulting two strings of~length~$2k-1$ and~$2k-4$ have an~overlap of~length~$1$, so the final common superstring has length~$4k-6$.

We use the same idea to~iterate this construction: for an~integer parameter~$t\ge 1$ and for every $i \in [t]$, let 
$p_i=\texttt{ABAABX}_iy$ and $q_i=\texttt{ABX}_iy$ for a~fresh symbol $\texttt{X}_i$. Let
\[S_t = \bigcup_{i \in [t]} \operatorname{CSpec}_k(p_i) \cup \operatorname{CSpec}_k(q_i).\]
Each block $i$ has a~superstring of~length~$3k-4$ starting and ending with $y\texttt{ABAAB}$. All these block superstrings can be~merged into a~superstring for~$S_t$ of~length $(2k-3)t+k-1$. At~the same time, the greedy algorithm may produce
a~superstring of~length~$(4k-6)t$ by combining non-overlapping superstrings of length $4k-6$ for each block. Hence,
\[
    \rho_k \ge \frac{(4k-6)t}{(2k-3)t+k-1}.
\] 
Since $t\ge 1$ was arbitrary, we obtain $\rho_k \ge 2$.

%% file: sections/lower_bound_rho_3.tex
\section{Lower bound for \texorpdfstring{$\rho_3$}{rho3}}\label{sec:rho3-lower}


%
%
%

\begin{theorem}\label{thm:lower_3}
	$\rho_3 \ge 9/5$.
\end{theorem}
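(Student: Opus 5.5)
The plan follows the template of \Cref{sec:rhok-lower}. We build a finite \emph{gadget} $B$ of $3$-strings and make $t$ copies $B_1,\dots,B_t$, each with its own fresh symbols. The copies share a common "hub" so that their union $S_t$ still has a balanced, weakly connected graph $\mathcal G(S_t)$. Then \Cref{lem:eulerian-optimum} gives $\operatorname{OPT}(S_t)=|S_t|+2$. If the gadget has $5m$ strings and greedy can turn each copy into a string of length $9m$ that overlaps no other copy, then greedy outputs a string of length $9mt$. The ratio is then $\frac{9mt}{5mt+2}\to\frac95$ as $t\to\infty$.

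\textbf{Choosing the gadget.} For $k=3$, $\mathcal G(S)$ lives on $2$-strings, and a merge of overlap $2$ means following an edge of $\mathcal G(S)$ in a path. Greedy first performs all overlap-$2$ merges it likes. This decomposes $E(\mathcal G(S))$ into a trail cover whose trails are chosen adversarially among maximal ones. It then performs overlap-$1$ merges (sharing one symbol), and finally concatenates the rest. A trail with $r$ edges gives a string of length $r+2$. So greedy's output length is $\|S\|$ minus the compression: twice the number of overlap-$2$ merges plus the number of overlap-$1$ merges.

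The ratio $9/5$ per $5$ strings corresponds to a compression of $6$ per $5$ strings, against the optimal compression of about $10$. So the adversarial run must leave roughly $2$ trails per $5$ edges, and only about $2$ overlap-$1$ merges among them.

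I would look for $B$ as a small balanced Eulerian digraph on $2$-strings, in the spirit of the two-cycle picture above. It should admit a bad decomposition into trails that end at vertices from which no other trail starts, since otherwise greedy would continue with overlap $2$. It should also force few shared endpoint symbols for overlap-$1$ merges. I expect to find it first by computer search over small alphabets, seeded by the earlier $>\frac53$ construction, and then verify it by hand.

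\textbf{Verifying greedy's run.} At each stage I would check, via \Cref{lem:endpoint-inheritance}, that the intended merge has maximum overlap. This means:
\begin{itemize}
\item during the overlap-$2$ phase, the chosen trails can be grown greedily in the intended order;
\item after that phase no overlap-$2$ pair remains, i.e.\ for every pair of current strings the last two symbols of one differ from the first two of the other;
\item the chosen overlap-$1$ merges are available, and afterwards all remaining pairs, including pairs across different copies, have overlap $0$. Fresh symbols at the relevant string ends should guarantee this for cross-copy pairs.
\end{itemize}
Ties are resolved in the adversary's favor, as allowed by the definition of $\operatorname{GREEDY}(S)$.

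\textbf{Main obstacle.} The delicate point is designing a gadget in which the cheap overlap-$1$ merges really are blocked. Greedy would happily glue trails sharing a symbol, and with $3$-strings over mostly shared hub symbols there are few letters to prevent this. Simultaneously, the gadgets must still connect through a common hub so that the optimum stays $|S_t|+2$. I would first try to put fresh per-copy symbols at the tails and heads of the bad trails, with the hub vertices used only in interior positions.
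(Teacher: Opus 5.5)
\textbf{There is a genuine gap: the construction itself is missing.} Your framework matches the paper's: copies of a $5$-string gadget chained through a shared vertex, a fresh symbol per copy, and greedy's run checked via \Cref{lem:endpoint-inheritance}. But you never exhibit the gadget. For a lower bound proved by example, the explicit instance is the whole proof, so deferring it to a computer search leaves the theorem unproved.

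The missing ingredient is the $3$-spectrum of $\texttt{ABXAXAB}$, namely $\texttt{ABX},\texttt{BXA},\texttt{XAX},\texttt{AXA},\texttt{XAB}$. This string starts and ends with $\texttt{AB}$, so copies chain into $\texttt{AB}[\texttt{X}_1\texttt{AX}_1\texttt{AB}]\cdots[\texttt{X}_t\texttt{AX}_t\texttt{AB}]$. That string has length $5t+2$ and its $3$-spectrum consists of $5t$ distinct strings. This gives $\operatorname{OPT}\le 5t+2$ directly, without \Cref{lem:eulerian-optimum}.

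In each block, greedy may build $\texttt{AX}_i\texttt{AX}_i$ and $\texttt{BX}_i\texttt{ABX}_i$ using three overlap-two merges. These closed trails have pairwise distinct roots $\texttt{AX}_i$ and $\texttt{BX}_i$, so no overlap-two merge remains. Every resulting string starts with $\texttt{A}$ or $\texttt{B}$ and ends with a fresh $\texttt{X}_i$, so no overlap-one merge is available either. The output therefore has length $9t$, and $\frac{9t}{5t+2}\to\frac95$. Your heuristic of fresh letters at trail ends, with the hub vertex $\texttt{AB}$ in interior position, is exactly what makes this work. It still has to be instantiated and checked.

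\textbf{Your quantitative target is also inconsistent.} With $2$ trails per $5$ edges, the overlap-two phase alone gives compression $6$ per $5$ strings. That is all the compression a ratio of $\frac95$ allows. So there must be no overlap-one merges at all, not ``about $2$''; two more merges would give length $7$ per $5$ strings.

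This matches the upper bound. In the notation of \Cref{sec:rho3-upper}, the two inequalities $\kappa_1\le n+2\tau(\mathcal G)-\frac32\kappa_2$ and $\kappa_1\le\kappa_2$ combined in the proof of \Cref{thm:rho3-upper} must both be tight up to $o(n)$ for an instance approaching $\frac95$. This forces $\kappa_1\approx\kappa_2\approx\frac25 n$.

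Since $\mathcal G(S_t)$ is balanced, \Cref{lem:pair-component-accounting} says all greedy trails are closed. What you concretely need is therefore about $\frac25 n$ closed trails with pairwise distinct roots, such that no trail's last letter equals another trail's first letter. The gadget above delivers exactly this.
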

\begin{proof}
	The main building block of~our construction is a~string \texttt{ABXAXAB} and its 3-spectrum \texttt{ABX}, \texttt{BXA}, \texttt{XAX}, \texttt{AXA}, \texttt{XAB}. 
	For this instance, the greedy algorithm may proceed as~follows: 
	merge \texttt{AXA} and \texttt{XAX} into \texttt{AXAX}; then, merge \texttt{BXA}, \texttt{XAB}, and \texttt{ABX} into \texttt{BXABX}; finally, concatenate the two remaining strings \texttt{AXAX} and \texttt{BXABX}. Thus, instead of~using four $2$-overlaps, it~uses three $2$-overlaps, thereby producing a~superstring of~length~$9$ 
	instead of~an~optimum superstring of~length~$7$.
	
	The fact that the string \texttt{ABXAXAB} has an~overlap of~length~$2$ with itself
	allows~us to~iterate this construction efficiently. Namely, consider a~string
	(square brackets indicate parameterized blocks)
	\[
		\texttt{AB}
		\left[\texttt{X}_1\texttt{AX}_1\texttt{AB}\right]
		\left[\texttt{X}_2\texttt{AX}_2\texttt{AB}\right]
		\dotsb
		\left[\texttt{X}_t\texttt{AX}_t\texttt{AB}\right]
	\]
	of~length $5t+2$ and its 3-spectrum consisting of~$5t$ different 3-strings. 
	The greedy algorithm may take $3t$~overlaps of~length~$2$ to~get the following $2t$~strings:
	\[
		\texttt{AX}_1\texttt{AX}_1,\, \texttt{BX}_1\texttt{ABX}_1,\,
		\texttt{AX}_2\texttt{AX}_2,\, \texttt{BX}_2\texttt{ABX}_2,\, 
		\dotsc,
		\texttt{AX}_t\texttt{AX}_t,\, \texttt{BX}_t\texttt{ABX}_t.
	\]
	For these strings, no~nontrivial overlap remains, so~the greedy algorithm concatenates them and gets a~superstring of~length~$9t$. Thus, $\rho_3 \ge \frac{9t}{5t+2}$, for all~$t$, and hence $\rho_3 \ge \frac{9}{5}$.
\end{proof}

%% file: sections/upper_bound_rho_3.tex
\section{Upper bound for \texorpdfstring{$\rho_3$}{rho3}}\label{sec:rho3-upper}

\begin{theorem}\label{thm:rho3-upper}
	$\rho_3\le\frac95$.
\end{theorem}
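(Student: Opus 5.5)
We plan to compare the greedy output $g$ with the optimum by counting overlaps of each length. For an instance $S\subseteq\Sigma^3$ with $n$ strings, both the greedy superstring and an optimal one are obtained by merging the strings in some order, using $n-1$ overlaps, each of size $0$, $1$ or $2$. Let $g_2,g_1$ (respectively $o_2,o_1$) be the numbers of $2$- and $1$-overlaps used by greedy (respectively by an optimal order). Then $|g|=3n-2g_2-g_1$ and $\operatorname{OPT}(S)=3n-2o_2-o_1$. We must show $5|g|\le 9\operatorname{OPT}(S)$.

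\textbf{Step 1: $2$-overlaps via the graph $\mathcal G(S)$.} The strings are edges of $\mathcal G(S)$, and a $2$-overlap between consecutive strings is exactly a shared vertex traversed by a trail. Hence $n-o_2\ge\tau(\mathcal G(S))$. We will show that greedy's $2$-merges form a maximal collection of edge-disjoint trail extensions. By \Cref{lem:endpoint-inheritance}, once greedy passes the $2$-overlap phase, no remaining pair has overlap $2$. So greedy's chains form a trail cover of $\mathcal G(S)$ in which no trail ends at a vertex where another trail starts. This is a "maximal" trail cover. Within each weak component $C$ of $\mathcal G(S)$ we intend to bound the number of trails in such a maximal cover in terms of $\tau(C)$ and the cycle structure. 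The extremal example is a component made of two cycles joined at a vertex, which greedy splits into two closed trails. The lower-bound gadget \texttt{ABXAXAB} is exactly of this type: $5$ edges, greedy uses $3$ two-overlaps instead of $4$.

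\textbf{Step 2: $1$-overlaps.} After the $2$-phase, greedy works on the remaining strings (chains) and takes a maximal set of $1$-overlaps, where $\operatorname{ov}=\operatorname{ov}(\operatorname{right},\operatorname{left})$ by \Cref{lem:endpoint-inheritance}. A maximal-matching-type argument, in the spirit of \Cref{thm:half-compression}, should give $g_1\ge\frac12(\text{optimal number of $1$-links among greedy chains})$. Note that the chains may not be linearizable with $1$-overlaps as well as OPT does. So we will charge OPT's $1$-overlaps and $2$-overlaps against greedy's chains: each $2$-overlap of OPT that greedy did not realize separates greedy chains, and contributes at most $2$ to the gap.

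\textbf{Step 3: combining via a weighted inequality.} The aim is to write $9\operatorname{OPT}-5|g|$ as a nonnegative combination of per-component slacks plus a term of the form $\frac12\cdot$(compression inequality from \Cref{thm:half-compression}). We will do a case analysis on each weak component $C$ according to $e(C)$ and the number $c(C)$ of greedy chains inside it. The claim to establish is that a component with $e$ edges, where greedy uses $e-c$ two-overlaps and OPT uses at least $e-\tau$, loses at most $\frac{4}{5}$ per component relative to length $e+2$. Equivalently, the worst ratio per component is that of the gadget, $9/5$. Cross-component $1$-overlaps in greedy then only help, and the $1$-overlap matching argument handles the rest.

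\textbf{Main obstacle.} The hardest step is Step 1: bounding the number $c(C)$ of trails in a maximal trail cover against $e(C)$ and $\tau(C)$, sharply enough that $c$ extra chains cost at most the allowed amount. Concretely, we need something like $c(C)\le\tau(C)+\frac{e(C)-\tau(C)+?}{4}$ when $c(C)>\tau(C)$. The base case is that an extra chain requires a closed trail together with at least a few edges elsewhere. We would first try induction on $e(C)$, deleting a closed greedy trail, and handle small components (e.g.\ loops \texttt{AAA}, $2$-cycles) by explicit enumeration.
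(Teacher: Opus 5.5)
\textbf{There is a genuine gap.} It sits exactly in the step you call the main obstacle: the overlap-two phase cannot be bounded on its own, component by component.

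Your target $c(C)\le\tau(C)+\frac{e(C)-\tau(C)+O(1)}{4}$ already fails for the iterated version of the gadget you cite. In the $3$-spectrum of $\texttt{AB}\,\texttt{X}_1\texttt{AX}_1\texttt{AB}\cdots\texttt{X}_t\texttt{AX}_t\texttt{AB}$, all blocks share the vertex $\texttt{AB}$. So $\mathcal G$ is a single Eulerian component with $\tau=1$ and $e=5t$, yet greedy may leave $c=2t$ chains ($\texttt{AX}_i\texttt{AX}_i$ and $\texttt{BX}_i\texttt{ABX}_i$). The ratio $\frac95$ arises from chaining gadgets inside one component, so $\tau$ does not pay for them. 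A single gadget only gives $\frac97$.

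Worse, no bound on $c(C)$ in terms of $e(C)$ and $\tau(C)$ can give $\frac95$ if overlap-one merges are treated as a bonus. Take letters $\texttt{A},\texttt{X}_0,\dots,\texttt{X}_t$. For each $i\in[t]$, take the seven $3$-strings of the closed trails $\texttt{AX}_{i-1}\texttt{X}_i\texttt{AX}_{i-1}$, $\texttt{X}_i\texttt{AX}_i\texttt{A}$, and $\texttt{X}_i\texttt{X}_{i-1}\texttt{X}_i\texttt{X}_{i-1}$.
\begin{itemize}
\item These $3t$ chains have pairwise distinct roots $\texttt{AX}_{i-1}$, $\texttt{X}_i\texttt{A}$, $\texttt{X}_i\texttt{X}_{i-1}$, so greedy may end its overlap-two phase with them.
\item $\mathcal G$ is balanced and weakly connected: the trails meet at $\texttt{X}_{i-1}\texttt{X}_i$, $\texttt{X}_i\texttt{A}$, and $\texttt{AX}_i$.
\item Hence $n=7t$, $\kappa_2=3t$, $\tau(\mathcal G)=1$, and $\operatorname{OPT}=7t+2$.
\end{itemize}
Without overlap-one merges, the output would have length $n+2\kappa_2=13t$, giving ratio tending to $\frac{13}{7}>\frac95$. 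The true ratio is saved only because these chains are very cheap to link. Their $\mathcal M$-edges are $\texttt{A}\to\texttt{X}_{i-1}$, $\texttt{X}_i\to\texttt{A}$, and $\texttt{X}_i\to\texttt{X}_{i-1}$, so $\tau(\mathcal M)=2$. Greedy is therefore forced to make at least $\frac32t-1$ overlap-one merges, all inside the same component.

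\textbf{The missing idea is a coupling of the two phases.} You need to show that the overlap-two phase can leave many chains only when those chains are cheap to link with one-overlaps.

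Your Step~2, the half-compression bound $2\kappa_1\le\kappa_2+\tau(\mathcal M)$, is correct and is exactly what the paper uses. It is useful only together with an upper bound on $\tau(\mathcal M)$, which depends only on the first and last letters of greedy's chains. Charging OPT's unrealized two-overlaps does not provide such a bound, and your proposal contains none.

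The paper measures the slack of the overlap-two cover by a budget $\sigma$: the internal positions of non-loop trails not reserved for hosting loops, minus one per closed trail. It proves two inequalities:
\[
2\kappa_2+\sigma\le n+\tau(\mathcal G),\qquad \tau(\mathcal M)\le2\sigma+2\tau(\mathcal G).
\]
The second bound is the heart of the proof. It rests on three facts:
\begin{itemize}
\item A zero-budget closed trail uses only two letters, so all such trails on one pair of letters, together with their loops, form a single trail of $\mathcal M$.
\item Loops that are not isolated in $\mathcal M$ can be inserted into existing $\mathcal M$-trails for free.
\item A trail with budget $\delta_P$ can host at most $\delta_P+1$ such bundles and isolated loops.
\end{itemize}
Eliminating $\sigma$ gives $\kappa_1\le n+2\tau(\mathcal G)-\frac32\kappa_2$. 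Adding $\frac45$ of this to $\frac15$ of the trivial bound $\kappa_1\le\kappa_2$ yields $\kappa_1+\kappa_2\le\frac45n+\frac85\tau(\mathcal G)$. Combined with $|g|=n+\kappa_1+\kappa_2$ and $\operatorname{OPT}\ge n+\tau(\mathcal G)+1$, this proves the theorem.
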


\subsection{Graph representation}

Fix an~instance $S$ of~$3$-\problemabbr{SCS} and an~output $g\in\operatorname{GREEDY}(S)$.
By~\Cref{lem:endpoint-inheritance}, overlaps never exceed two and depend only on~the leftmost and rightmost input strings.
So, the greedy algorithm always performs all overlap-two merges before any overlap-one merge, and all overlap-one merges before any zero-overlap merge.
Let $\kappa_2$ be~the number of~strings remaining after all overlap-two merges, and let $\kappa_1$ be~the number of~strings remaining after all subsequent overlap-one merges; thus, $\kappa_2\ge\kappa_1$.
There are $n-\kappa_2$ overlap-two merges and $\kappa_2-\kappa_1$ overlap-one merges.
Starting from total length $3n$, we~therefore obtain
\[|g|=3n-2(n-\kappa_2)-(\kappa_2-\kappa_1)=n+\kappa_1+\kappa_2.\]

Let $\mathcal G=\mathcal G(S)$.
Every overlap-two merge concatenates two trails in~$\mathcal G$; hence, the greedy algorithm forms a~trail cover $C$ of $\mathcal{G}$ in~which no~two distinct trails can be~concatenated, and $|C|=\kappa_2$.
For~$0 \le i \le e(P)$, we denote by $P_i \in \Sigma^2$ the~$i$-th vertex of~the trail $P$ in~$\mathcal{G}$.
The trail $P$ is~\emph{open} if~$P_0\ne P_{e(P)}$ and \emph{closed} otherwise.
For closed $P$, the vertex $P_0=P_{e(P)}$ is~the \emph{root} of~$P$.
We~call all trail vertices \emph{internal} except for the first and last vertices of~the trail.

Let $\mathcal M$ be~a~directed multigraph with vertex set $\Sigma$.
Consecutive vertices of~any $P \in C$ overlap in~one character, so~there are letters $a_0,\ldots,a_{e(P)+1} \in \Sigma$ such that
\[P_i=a_i a_{i+1}\quad(0\le i\le e(P)).\]
For each $P\in C$, the graph $\mathcal M$ has an edge
\[e_P\colon\ a_0\to a_{e(P)+1}.\]
The~overlap-one phase yields a~trail cover of~$\mathcal{M}$ with~$\kappa_1$ trails.

We~start with a~simple lower bound on~the size of~an~optimal string in~terms of~$\tau(\mathcal G)$.
\begin{lemma}\label{lem:opt-lower}
	\[\operatorname{OPT}(S)\ge n+\tau(\mathcal G)+1.\]
\end{lemma}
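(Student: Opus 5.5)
Take an optimal common superstring $w$ of $S$ with $|w|=\operatorname{OPT}(S)$, and read off from it a trail cover of $\mathcal G$ whose size bounds the length of $w$. Order the input strings by the position of their first occurrence in $w$. Since no input string is a substring of another, these starting positions are distinct. Let $s^{(1)},\dots,s^{(n)}$ be the strings in this order, and let $d_j\ge1$ be the distance between the starting positions of $s^{(j)}$ and $s^{(j+1)}$. Then
\[|w|\ge 3+\sum_{j=1}^{n-1} d_j .\]
Each string $s\in S$ is an edge of $\mathcal G$, from its length-2 prefix to its length-2 suffix.

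The key step is to split the sequence $s^{(1)},\dots,s^{(n)}$ into maximal runs in which consecutive starting positions differ by exactly $1$. Suppose $d_j=1$. Then $s^{(j)}$ and $s^{(j+1)}$ occupy consecutive windows of $w$, so the suffix of $s^{(j)}$ equals the prefix of $s^{(j+1)}$. Hence the edge $s^{(j)}$ ends where the edge $s^{(j+1)}$ starts, and each run spells a directed walk in $\mathcal G$. Its edges are distinct input strings, so the walk is a trail. The runs partition $S=E(\mathcal G)$ into $r$ trails, which gives $r\ge\tau(\mathcal G)$.

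Now count the length. There are $n-r$ indices with $d_j=1$ and $r-1$ indices with $d_j\ge2$. Therefore
\[|w|\ge 3+(n-r)+2(r-1)=n+r+1\ge n+\tau(\mathcal G)+1.\]

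The only delicate point is the claim that $d_j\ge 2$ at a run boundary, which holds by the definition of maximal runs. I also need every input string to be used exactly once: I use only the first occurrence of each string, and distinct strings have distinct starting positions, so there are no collisions. I expect no real obstacle here; the argument is a direct transcription of the standard overlap-counting bound.
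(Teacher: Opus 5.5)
Your proof is correct and is essentially the paper's argument. The paper also fixes one occurrence of each input string in an optimal $w$ and groups consecutive occurrences that overlap in two characters into trails, so the number of groups satisfies $\gamma_2\ge\tau(\mathcal G)$. It then uses $|w|\ge n+\gamma_2+\gamma_1$ with $\gamma_1\ge1$, which is the same count as your $d_j=1$ inside runs and $d_j\ge2$ at run boundaries.
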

\begin{proof}
	Let $w$ be~a~shortest common superstring of~$S$, and choose one occurrence of~each input string in~$w$.
	Apply the same two-stage grouping to~these occurrences in~their order, and let $\gamma_2$ and~$\gamma_1$ denote the numbers of~groups remaining after the overlap-two and overlap-one stages, respectively.
	So, $\gamma_2\ge\tau(\mathcal G)$, while $\gamma_1\ge1$ and therefore
	\[\operatorname{OPT}(S)=|w|\ge n+\gamma_2+\gamma_1\ge n+\tau(\mathcal G)+1.\qedhere\]
\end{proof}

\subsection{Proof overview}
The~greedy algorithm has two main phases: first, it merges strings with overlap~two, and then it merges the resulting strings with overlap~one.
The~proof analyzes these phases separately.
The~graph $\mathcal G$ captures the first phase, whereas the graph~$\mathcal M$ captures the second.
Thus, $\kappa_2$ is~exactly the size of~the trail cover $C$ of~$\mathcal G$, while $\kappa_1$ is~the size of~the trail cover produced in~$\mathcal M$.
In~the lower-bound construction from~\Cref{thm:lower_3}, the greedy output contains many length-two closed trails corresponding to~$\texttt{AX}_i\texttt{AX}_i$.
The~following proposition shows that, without closed trails of~length one or~two, the approximation ratio of~the greedy algorithm is~already below $\frac53$.

\begin{proposition}
	Suppose that every closed trail in~$C$ has at~least three edges.
	Then
	\[\frac{|g|}{\operatorname{OPT}(S)}<\frac53.\]
\end{proposition}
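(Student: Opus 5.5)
The plan is to combine the identity $|g|=n+\kappa_1+\kappa_2$ with \Cref{lem:opt-lower}. I will bound $\kappa_2$ sharply in terms of the structure of $\mathcal G$, and use only the trivial bound $\kappa_1\le\kappa_2$ for the overlap-one phase. Write $C=C_{\mathrm{open}}\cup C_{\mathrm{closed}}$, with $o=|C_{\mathrm{open}}|$ and $c=|C_{\mathrm{closed}}|$. Then $\kappa_2=o+c$ and $|g|\le n+2o+2c$.

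\textbf{Step 1 (local structure of a maximal trail cover).} No two distinct trails of $C$ can be concatenated. By \Cref{lem:endpoint-inheritance}, two current strings have overlap two exactly when the last vertex of the first trail equals the first vertex of the second. Hence at every vertex $v$ of $\mathcal G$ three things hold.
\begin{itemize}
\item[(i)] No open trail ends at $v$ while another open trail starts at $v$. This also holds for a single open trail, since its endpoints are distinct.
\item[(ii)] If an open trail ends at $v$ or starts at $v$, then no closed trail is rooted at $v$, since a closed trail rooted at $v$ could be appended or prepended.
\item[(iii)] At most one closed trail is rooted at $v$. This is not needed below.
\end{itemize}
By (i), at each vertex $v$ the number of open trails starting at $v$ minus the number ending there equals $\deg^+(v)-\deg^-(v)$. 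Closed trails and internal passages contribute zero to this difference. One of the two counts is zero, so the number of open trails starting at $v$ is $\max(0,\deg^+(v)-\deg^-(v))$. Therefore $o=\sum_v\max(0,\deg^+(v)-\deg^-(v))$.

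\textbf{Step 2 (comparison with $\tau(\mathcal G)$).} I will use the standard formula $\tau(\mathcal G)=\sum_{W}\max\bigl(1,\sum_{v\in W}\max(0,\deg^+(v)-\deg^-(v))\bigr)$, where the sum is over weak components $W$. This gives $\tau(\mathcal G)=o+z$, where $z$ is the number of balanced weak components. Each balanced component contains no open trail, so it contains at least one closed trail; hence $c\ge z$. By hypothesis each closed trail has at least three edges, and each open trail has at least one edge, so
\[
n=e(\mathcal G)\ge o+3c .
\]

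\textbf{Step 3 (the arithmetic).} By \Cref{lem:opt-lower}, $\operatorname{OPT}(S)\ge n+o+z+1$. It suffices to show
\[
n+2o+2c<\tfrac53\,(n+o+z+1),
\]
which is equivalent to $2o+2c<\tfrac23 n+\tfrac53 o+\tfrac53 z+\tfrac53$. Using $n\ge o+3c$, the right-hand side is at least $2c+\tfrac73 o+\tfrac53 z+\tfrac53$. This strictly exceeds $2o+2c$ because $\tfrac73 o\ge 2o$ and $\tfrac53>0$, which gives the strict inequality $|g|/\operatorname{OPT}(S)<\frac53$.

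\textbf{Main obstacle.} The delicate point is Step 1. I must justify that the strings produced by the overlap-two phase really correspond to trails in $\mathcal G$ whose endpoints are the first and last $2$-mers. I must also check that "no further overlap-two merge is possible" translates exactly into conditions (i) and (ii), including the mixed open/closed cases at a shared vertex. The trail-cover formula for $\tau$ used in Step 2 is standard but should be stated carefully per weak component. If needed, I will avoid it and instead argue directly that any trail cover of a component with positive imbalance has at least that many open trails, and of a balanced component at least one trail.
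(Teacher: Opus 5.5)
Your proof is correct and follows essentially the same route as the paper: bound $|g|\le n+2\kappa_2$, count edges to get $n\ge o+3c$, and use $\operatorname{OPT}(S)\ge n+\tau(\mathcal G)+1\ge n+o+1$, after which the arithmetic closes exactly as in the paper's one-line computation. The only difference is that you re-derive from the maximality of $C$ that there are exactly $\sum_v\max(0,\deg^+(v)-\deg^-(v))$ open trails (and carry along the unused terms $z$ and $c\ge z$), whereas the paper simply cites this count as \Cref{lem:pair-component-accounting}.
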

\begin{proof}
	Let $\beta$ and~$\gamma$ be~the numbers of~open and closed trails in~$C$, respectively.
	Every open trail contains at~least one edge, while every closed trail contains at~least three, and hence
	\[n\ge\beta+3\gamma=3\kappa_2-2\beta.\]
	Since $\kappa_1\le\kappa_2$,
	\[|g|=n+\kappa_1+\kappa_2\le n+2\kappa_2\le\frac{5n+4\beta}{3}.\]
	By~\Cref{lem:pair-component-accounting}, $\beta\le\tau(\mathcal G)$, and~\Cref{lem:opt-lower} gives $\operatorname{OPT}(S)\ge n+\beta+1$.
	Therefore
	\[\frac{|g|}{\operatorname{OPT}(S)}\le\frac{5n+4\beta}{3(n+\beta+1)}=\frac53-\frac{\beta+5}{3(n+\beta+1)}<\frac53.\qedhere\]
\end{proof}

The general case is~harder precisely because $C$ may contain many short closed trails.
The proof treats such trails as objects that must be paid for by longer trails.
The~payment mechanism is~a~budget.
A~nonisolated loop $[aaa]$ is~assigned to~an~internal occurrence of~the vertex $aa$ on~another trail so~that it~is~accounted for in~that trail's budget.
For every trail $P$ that is~not a~loop, its \emph{budget} $\delta_P$ is~the number of~internal positions of~$P$ that remain available after the loop assignments, with one unit subtracted when $P$ is~closed.
The~total available budget is~$\sigma=\sum_P\delta_P$, where the sum ranges over all non-loop trails.

According to~their role in this accounting, the trails of~$\mathcal{G}$ fall into three types.
\begin{itemize}
	\item A~loop has no~budget of~its own.
	\item A~\emph{bundle} consists of~closed trails that share a~rigid structure and have zero budget: after the loop assignments, each has exactly one available internal position.
	\item A~\emph{connector} is~an~open trail or~a~closed trail with positive budget; connectors can pay for bundles and loops.
\end{itemize}

Consider an~$m$-edge trail.
Since all its intermediate strings have overlap two on~both sides, the trail arranges at~least $m-2$ strings optimally.
The budget of~a~connector measures how much of~this slack remains available to~pay for loops and bundles that the greedy algorithm left separate.

The budget links the two phases: if~the overlap-two phase leaves many trails, then little budget remains, and this limits the cost of~the overlap-one phase.
More precisely, we~convert the trail cover of~$\mathcal G$ into a~trail cover of~$\mathcal M$, with an~overhead controlled by~$\sigma$.
This accounting yields the two complementary inequalities
\[
    2\kappa_2\le n+\tau(\mathcal G)-\sigma,\qquad 
    \tau(\mathcal M)\le2\sigma+2\tau(\mathcal G).
\]
The~first inequality says that a~large budget can exist only when the overlap-two phase leaves few trails, so~a~large $\sigma$ means that this phase is~efficient.
The~second inequality uses the assignments above to~build trail covers of~$\mathcal M$, so~a~small $\sigma$ shows that the overlap-one phase is~efficient.

\subsection{Trail bounds}

The next lemma bounds the number of~strings remaining after the overlap-one phase.
Starting from $\kappa_2$ strings, the greedy algorithm produces a~trail cover of~$\mathcal M$ with at~most $\frac{\kappa_2+\tau(\mathcal M)}{2}$ trails.

\begin{lemma}\label{lem:terminal-packing}
	\[2\kappa_1\le\kappa_2+\tau(\mathcal M).\]
\end{lemma}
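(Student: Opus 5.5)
The plan is to read the inequality as a statement about the overlap-one phase alone, viewed inside $\mathcal M$. The phase starts from the $\kappa_2=e(\mathcal M)$ single-edge trails $e_P$. It repeatedly concatenates two trails, since by \Cref{lem:endpoint-inheritance} an overlap-one merge of $x,y$ happens exactly when the last letter of $x$ equals the first letter of $y$. It ends with a trail cover $D$ of $\mathcal M$ with $|D|=\kappa_1$. Writing $m_T$ for the number of edges of $T\in D$, we have $\sum_{T\in D}m_T=\kappa_2$, so the claim is equivalent to
\[\sum_{T\in D}(2-m_T)\le\tau(\mathcal M).\]
The only property of $D$ I will use is that it is \emph{non-concatenable}. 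Greedy stops merging with overlap one only when, for distinct $T,T'\in D$, the end letter of $T$ never equals the start letter of $T'$.

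\textbf{Step 1 (local structure at a vertex).} For $v\in\Sigma$ let $s(v)$ and $t(v)$ count the trails of $D$ starting and ending at~$v$. Non-concatenability gives the following: if $s(v),t(v)\ge1$, then $s(v)=t(v)=1$ and both counts come from one closed trail rooted at~$v$. Since internal passes are balanced, $\operatorname{out}(v)-\operatorname{in}(v)=s(v)-t(v)$ in $\mathcal M$. Hence every open trail of $D$ starts at a vertex where no trail ends, and $\sum_v\max(0,\operatorname{out}(v)-\operatorname{in}(v))$ equals the number $O$ of open trails in~$D$. Also, two single-edge closed trails (loops $a\to a$) at the same vertex would be concatenable, so each vertex carries at most one loop of~$D$.

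\textbf{Step 2 (lower bound on $\tau(\mathcal M)$).} Any trail cover starts at least $\max(0,\operatorname{out}(v)-\operatorname{in}(v))$ trails at each~$v$. It also needs at least one trail in every balanced weak component. Therefore $\tau(\mathcal M)\ge O+z$, where $z$ is the number of weak components of $\mathcal M$ containing no open trail of~$D$.

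\textbf{Step 3 (charging).} Only open trails and loops have positive $2-m_T$. An open trail with $m$ edges contributes $2-m=1-(m-1)$, and a closed trail with $m\ge2$ edges contributes $\le0$ while having $m-1\ge1$ internal vertex occurrences. Take a loop of $D$ at~$a$. If some other trail of $D$ passes through $a$ internally, assign the loop to one such internal occurrence. Distinct loops sit at distinct vertices, so distinct loops get distinct occurrences. Each trail of $D$ with $m$ edges then receives at most $m-1$ loops, and its contribution together with its assigned loops is at most $1$ if it is open and at most $0$ if it is closed. If no trail passes through $a$ internally, then by Step~1 no other trail starts or ends at $a$ either. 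The loop is then a whole weak component of $\mathcal M$, counted once in~$z$. Summing gives $\sum_T(2-m_T)\le O+z\le\tau(\mathcal M)$.

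The main obstacle I expect is Step~3, namely handling the loops correctly. Loops are the only trails whose excess is not paid by their own edges. The argument has to use non-concatenability in both ways: loops cannot share a vertex, and a loop with no passing trail is isolated. A smaller point to verify is that a closed trail's root cannot host a loop from a different trail, which also follows from Step~1, and that this is exactly what lets closed trails be treated uniformly in the charging.
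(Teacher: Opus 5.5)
Your reduction to $\sum_{T\in D}(2-m_T)\le\tau(\mathcal M)$ and Steps~1--2 are sound. (In Step~2, $z$ must count only weak components that contain an edge, since $V(\mathcal M)=\Sigma$.) The gap is the final claim of Step~3, which is false.

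A closed trail with $m\ge2$ edges has exactly $m-1$ internal occurrences, and the root occurrence is not among them. So the fact that the root cannot host a loop gives you no extra unit, unless the trail revisits its root. If all $m-1$ internal occurrences receive loops, the trail together with its loops contributes $(2-m)+(m-1)=1$, not at most $0$.

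Concretely, take $S=\{\texttt{acb},\texttt{bda},\texttt{beb}\}$. There are no overlap-two merges, and $\mathcal M$ has edges $a\to b$, $b\to a$, $b\to b$. Greedy may merge \texttt{acb} with \texttt{bda} and stop with $D=\{a\to b\to a,\ b\to b\}$. The loop is charged to the internal occurrence of $b$, and $\sum_T(2-m_T)=1$. Your summation, however, yields the bound $O+\#\{\text{isolated loops}\}=0$. The lemma survives here ($4\le 3+1$) only because this component has no open trail and so contributes $1$ to $z$, a term your charging spends only on isolated loops.

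Moreover, adding an open trail $x\to a\to y$ keeps $D$ non-concatenable. The same saturated cycle then lies in a component with one open trail, and it must be paid for by the free internal occurrence of $x\to a\to y$ at $a$. So slack has to be transferred between trails of a component; a trail-by-trail bound cannot close the argument.

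To repair it, let $f_T$ be the number of internal occurrences of a non-loop trail $T$ that host no loop. In a weak component $K$ other than a lone loop, the total contribution of its trails equals $\sum_T(1-f_T)$ over its non-loop trails. You therefore need $c_K\le\sum_T f_T+[O_K=0]$, where $c_K$ and $O_K$ count the closed non-loop and the open trails of $K$.

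One way to get this is to take a spanning tree of the bipartite trail--vertex incidence graph of $K$. Root it at an open trail, or at a closed trail if $O_K=0$. By Step~1, a loop at $v$ forbids endpoints of all other trails at $v$ and occupies at most one occurrence there. If there is no loop at $v$, every internal occurrence at $v$ is free. In that case a closed trail is either rooted at $v$ (at most one such, excluding all other endpoints at $v$) or passes through $v$ internally. Hence, at every tree vertex $v$, the parent and children of $v$ have at least as many free internal occurrences at $v$ as $v$ has closed children. Summing over $v$ gives $\sum_T f_T\ge c_K-[O_K=0]$.

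The paper avoids this combinatorics altogether. The continuation of greedy from the set $R$ of $\kappa_2$ strings is a greedy run on $R$ with compression $\kappa_2-\kappa_1$. A minimum trail cover of $\mathcal M$ yields a superstring of $R$ with compression $\kappa_2-\tau(\mathcal M)$, and \Cref{thm:half-compression} then gives the lemma in one line.
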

\begin{proof}
	Let $R$ be~the set of~$\kappa_2$ strings after the overlap-two phase.
	By~\Cref{lem:endpoint-inheritance}, no~string in~$R$ contains another, and every overlap between distinct strings in~$R$ has length at~most one.
	The overlap-one phase on~$R$ has compression $\kappa_2-\kappa_1$.
	A~minimum trail cover of~$\mathcal M$ orders these $\kappa_2$ strings along $\tau(\mathcal M)$ trails, yielding a~common superstring of~$R$ with compression $\kappa_2-\tau(\mathcal M)$.
	Hence, by~\Cref{thm:half-compression},
	\[\kappa_2-\kappa_1\ge\frac12\bigl(\|R\|-\operatorname{OPT}(R)\bigr)\ge\frac12\bigl(\kappa_2-\tau(\mathcal M)\bigr).\]
	Rearranging gives $2\kappa_1\le\kappa_2+\tau(\mathcal M)$.
\end{proof}

By~the lemmas above, it~remains to~bound $\kappa_2$ and $\tau(\mathcal M)$ in~terms of~$n$ and $\tau(\mathcal G)$.
We~use a~nonnegative budget $\sigma$ and show that a~large $\sigma$ forces $\kappa_2$ to~be~small, whereas a~small $\sigma$ yields a~small cover of~$\mathcal M$.

We~begin by~accounting for the contribution of~each weak component to~$\tau(\mathcal G)$.
Let $H_1,\ldots,H_\nu$ be~the weak components of~$\mathcal G$, and for $1\le j\le\nu$ let
\[\beta_j=\sum_{\substack{v\in V(H_j), \\ \deg^+(v) \ge \deg^-(v)}}\deg^+(v)-\deg^-(v).\]
At~a~vertex $v$, the~imbalance $\deg^+(v)-\deg^-(v)$ equals the~number of~trails that start at~$v$ minus the~number of~trails that end at~$v$, for any trail cover of~$\mathcal{G}$.

We use the following result due to~Gallant, Maier, and Storer~\cite{GMS80}.
It~shows that the degree imbalances fully determine the number of~open trails in~$C$, whereas the number of~closed trails may depend on~the greedy choices.
Thus, the main difficulty is~to~control the closed trails.
\begin{lemma}[\cite{GMS80}]\label{lem:pair-component-accounting}
	$\tau(\mathcal G)=\sum_{j=1}^{\nu}\max\{\beta_j,1\}$.
	Moreover, each component $H_j$ of~$\mathcal G$ contains exactly $\beta_j$ open trails of~$C$.
\end{lemma}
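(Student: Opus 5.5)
We need to prove the Gallant–Maier–Storer lemma: $\tau(\mathcal G)=\sum_j\max\{\beta_j,1\}$, and each component $H_j$ contains exactly $\beta_j$ open trails of $C$.

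The plan is to work component by component, since no trail leaves a weak component. We first establish a lower bound valid for every trail cover, then an upper bound by construction. Finally we use the non-concatenability of $C$ to pin down its open trails.

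\textbf{Lower bound.} Fix any trail cover of $H_j$. By the observation preceding the lemma, at each vertex $v$ the value $\deg^+(v)-\deg^-(v)$ equals the number of trails starting at $v$ minus the number ending at $v$. A closed trail contributes $0$ to this count at every vertex; an open trail contributes $+1$ at its start and $-1$ at its end. Summing the positive imbalances over $v\in V(H_j)$ therefore shows that the number of open trails is at least $\beta_j$. Since $H_j$ has at least one edge, the cover also has at least one trail. Hence $\tau(H_j)\ge\max\{\beta_j,1\}$, and since $\tau$ is additive over weak components, $\tau(\mathcal G)\ge\sum_j\max\{\beta_j,1\}$.

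\textbf{Upper bound.} If $\beta_j=0$, then $H_j$ is balanced and weakly connected, so it has an Eulerian circuit, which is a single trail. If $\beta_j>0$, add a new vertex $z$. Add $\deg^+(v)-\deg^-(v)$ edges $z\to v$ for each vertex with surplus out-degree, and $\deg^-(v)-\deg^+(v)$ edges $v\to z$ for each vertex with surplus in-degree. The augmented graph is balanced and weakly connected, so it has an Eulerian circuit. Deleting the $2\beta_j$ edges incident to $z$ cuts this circuit into exactly $\beta_j$ trails covering $E(H_j)$. This gives equality in the first claim.

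\textbf{Open trails of $C$.} We must show the greedy cover $C$ has exactly $\beta_j$ open trails in $H_j$, not merely at least $\beta_j$. Here we use the property that no two distinct trails of $C$ can be concatenated, i.e.\ no trail of $C$ ends at a vertex where another trail of $C$ starts. If some vertex $v$ were both the start of one open trail and the end of a different open trail, those two trails could be concatenated, a contradiction. A single open trail cannot start and end at the same vertex by definition. Therefore at each vertex $v$ either no open trail starts or no open trail ends, so the number of open trails starting at $v$ is exactly $\max\{\deg^+(v)-\deg^-(v),0\}$. Summing over $v\in V(H_j)$ gives exactly $\beta_j$ open trails.

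The main obstacle is a subtle point in this last step. A closed trail rooted at $v$ could sit where an open trail ends, and appending it would also be a concatenation. This is harmless, because closed trails contribute nothing to the imbalance count. The only thing to confirm is that "cannot be concatenated" in the paper covers the open–open case. It does: merging would give overlap two between $\operatorname{right}$ and $\operatorname{left}$ strings, which by \Cref{lem:endpoint-inheritance} the greedy algorithm would still perform during the overlap-two phase.
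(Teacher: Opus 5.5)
Your proof is correct and complete. The imbalance count gives $\tau(H_j)\ge\max\{\beta_j,1\}$, and the auxiliary-vertex Eulerian augmentation gives the matching upper bound. Non-concatenability of $C$, which you rightly derive from \Cref{lem:endpoint-inheritance}, ensures that no vertex is both the start of one open trail and the end of another. This pins the number of open trails in $H_j$ at exactly $\beta_j$.

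The paper gives no proof of this lemma and simply cites Gallant, Maier, and Storer, so there is nothing to compare against; your argument is the standard one. One point in its favour: your augmentation step works verbatim for directed multigraphs with loops. The paper tacitly needs this when it later invokes ``the construction from'' this lemma to get $\tau(\mathcal D_P)\le 2$ for a subgraph of $\mathcal M$ in \Cref{lem:local-connector-cover}.

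A minor remark: for the upper bound you only need that deleting the edges at $z$ leaves at most $\beta_j$ trails. Your ``exactly'' is still true, because a vertex with out-surplus has no edge back to $z$, so no segment is empty.
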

%

We now define the budget formally. Set
\[C^*=\{P\in C:P\text{ is~open or~}e(P)>1\}.\]
The trails omitted from $C^*$ are precisely the loops $[aaa]$.
For a~closed $P \in C^*$, an~index $i \in [0, e(P) - 1]$ is~a~\emph{transition} if~$a_i\ne a_{i+1}$, where $P_i = a_i a_{i + 1}$.
Every other index $i$ is~called a~\emph{constant position}.
A~string is~called \emph{constant} if~it~has no~transitions.
A~transition at~$i=0$ is~said to~occur at~the root.
For every loop $[aaa]\in C$ whose weak component in~$\mathcal G$ contains another edge, there is~some other trail $P\in C$ that visits $aa$.
Note that $e(P) > 1$, and every occurrence $P_i=aa$ is~internal, since otherwise the greedy algorithm would concatenate $P$ and $[aaa]$.
Choose one such occurrence and say that $[aaa]$ is~assigned to~$P_i$.

For $P\in C^*$, let $\operatorname{free}(P)$ be~the set of~internal positions not used by~the loop assignment:
\[\operatorname{free}(P)=\{i:0<i<e(P)\text{ and no~loop is~assigned to~}P_i\}.\]
The number of~free positions will bound the number of~additional trails needed in~our construction of~a~trail cover of~$\mathcal M$.
Define
\[
    \delta_P=
    \begin{cases}
        |\operatorname{free}(P)|,&P\text{ is open},\\|\operatorname{free}(P)|-1,&P\text{ is closed}.
    \end{cases}
\]
For a~closed $P\in C^*$, the sequence $a_0,\ldots,a_{e(P)}$ contains at~least two transitions because $P$ is~a~trail with more than one edge, and $a_{e(P)}=a_0$ because $P$ is~closed.
At~most one of~them occurs at~the root, and every other transition belongs to~$\operatorname{free}(P)$; hence $\operatorname{free}(P)\ne\varnothing$ and $\delta_P\ge0$.
We~refer to~$\delta_P$ as~the budget of~$P$.
Finally, set
\[\sigma=\sum_{P\in C^*}\delta_P,\]
which is~the total budget over all trails in~$C^*$.

\begin{lemma}\label{lem:trail-accounting}
	\[2\kappa_2+\sigma\le n+\tau(\mathcal G).\]
\end{lemma}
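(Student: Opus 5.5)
The plan is a direct double count of $n=\sum_{P\in C}e(P)$, organised by how each trail's edges split into free positions, positions used by loop assignments, and a constant overhead. The leftover terms are then compared with $\tau(\mathcal G)$ via \Cref{lem:pair-component-accounting}.

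\textbf{Notation.} Let $L$ be the number of loops $[aaa]$ in $C$. Let $L_0\le L$ be the number of \emph{isolated} loops, those whose weak component in $\mathcal G$ has no other edge. Let $\beta$ and $\gamma^*$ be the numbers of open and closed trails in $C^*$. Then $\kappa_2=|C|=L+\beta+\gamma^*$.

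\textbf{Injectivity of the loop assignment.} The assignment maps each nonisolated loop to an internal position $P_i$ of some $P\in C^*$. Distinct loops are distinct input strings, so they correspond to distinct vertices $aa$. A position $P_i$ is a single vertex, so at most one loop is assigned to it. Hence the assignment is injective, and
\[\sum_{P\in C^*}\#\{\text{loops assigned to }P\}=L-L_0.\]

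\textbf{Counting edges of each trail.} A trail $P\in C^*$ has $e(P)-1$ internal positions, so
\[e(P)=|\operatorname{free}(P)|+\#\{\text{loops assigned to }P\}+1.\]
Moreover $|\operatorname{free}(P)|=\delta_P$ for open $P$ and $\delta_P+1$ for closed $P$. Summing over $C^*$ and adding one edge for each loop gives
\[n=L+(L-L_0)+\sigma+\gamma^*+|C^*|=2L-L_0+\sigma+\beta+2\gamma^*.\]
Substituting $2\kappa_2=2L+2\beta+2\gamma^*$ yields
\[n+\tau(\mathcal G)-2\kappa_2-\sigma=\tau(\mathcal G)-\beta-L_0.\]
So it remains to show $\beta+L_0\le\tau(\mathcal G)$.

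\textbf{Comparing with $\tau(\mathcal G)$.} This is the only non-bookkeeping step. By \Cref{lem:pair-component-accounting}, $\beta=\sum_j\beta_j$ and $\tau(\mathcal G)=\sum_j\max\{\beta_j,1\}$.
\begin{itemize}
\item An isolated loop forms its own weak component $H_j$, which is a single balanced vertex, so $\beta_j=0$. Such a component contributes $1$ to $\tau(\mathcal G)$ and nothing to $\beta$.
\item Distinct isolated loops lie in distinct components.
\item Every other component contributes $\max\{\beta_j,1\}\ge\beta_j$.
\end{itemize}
Summing over components gives $\tau(\mathcal G)\ge\beta+L_0$, which completes the argument.

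The main obstacle is the edge count: the loop-assignment injectivity and the $-1$ correction in the budget of closed trails must be tracked exactly so that the identity for $n$ comes out right.
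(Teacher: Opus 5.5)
Your proof is correct and is essentially the paper's argument: the same per-trail double count of $n$ against $2\kappa_2+\sigma$ (open trails $e(P)+1$, closed trails in $C^*$ $e(P)$, each minus their assigned loops; loops matched except the isolated ones), reducing to $\beta+\lambda_0\le\tau(\mathcal G)$ via \Cref{lem:pair-component-accounting}. You also spell out two points the paper leaves implicit: that the loop assignment is injective on positions, and that each isolated loop is its own balanced component contributing $1$ to $\tau(\mathcal G)$ and $0$ to $\beta$.
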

\begin{proof}
	Let $\beta=\sum_{j=1}^{\nu}\beta_j$ be~the number of~open trails in~$C$, and let $\lambda_0$ be~the number of~loops that form isolated weak components of~$\mathcal G$.
	By~the definition of~$\delta_P$, the contribution of~an~open trail $P$ to~$2\kappa_2+\sigma$ is~$e(P)+1$ minus the number of~loops assigned to~$P$.
	For a~closed trail $P\in C^*$, the contribution is~$e(P)$ minus the number of~loops assigned to~$P$.
	A~loop contributes $2$ to~$2\kappa_2+\sigma$ but only $1$ to~$n$.
	If~the loop is~nonisolated, then it~is~assigned to~a~trail $P$, and this assignment decreases $\delta_P$ by~$1$.
	Its contribution to~$2\kappa_2+\sigma$ is~therefore $1$, matching its contribution to~$n$.
	An~isolated loop has no~assignment, so~the difference remains and is~counted by~$\lambda_0$.
	By~\Cref{lem:pair-component-accounting}, $\beta+\lambda_0\le\tau(\mathcal G)$, and hence
	\[2\kappa_2+\sigma=n+\beta+\lambda_0\le n+\tau(\mathcal G).\qedhere\]
\end{proof}

Now, we construct a~trail cover of~$\mathcal M$.
We~first show that all closed trails with zero budget have a~rigid structure.

\begin{lemma}\label{lem:zero-delta-shape}
	If~$P\in C^*$ is~closed and $\delta_P=0$, then $P$ has exactly two transitions, one of~which occurs at~the root.
	Consequently, exactly two distinct letters occur among $a_0,\ldots,a_{e(P)-1}$, and every position $i$ with $a_i=a_{i+1}$ receives an~assigned loop.
\end{lemma}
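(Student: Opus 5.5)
The plan is to count the internal positions of $P$ directly. I will split them into transitions and constant positions, and then use $\delta_P=0$, which says $|\operatorname{free}(P)|=1$. Recall the setup: $P$ is closed, $e(P)=m\ge2$, and the letters $a_0,\dots,a_m,a_{m+1}$ satisfy $P_i=a_ia_{i+1}$. Closedness $P_0=P_m$ gives $a_m=a_0$ and $a_{m+1}=a_1$. Transitions are the indices $i\in[0,m-1]$ with $a_i\ne a_{i+1}$.

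First, the cyclic sequence $a_0,\dots,a_{m-1}$ (with $a_m=a_0$) is not constant. Otherwise every vertex of $P$ would be $aa$, and every edge of $P$ would be the loop $aa\to aa$. Since $P$ is a trail, it cannot use that edge twice, so $m\ge2$ is impossible. A nonconstant cyclic sequence has at least two transitions.

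Second, I relate transitions to free positions. At most one transition sits at index $0$, the root. I claim every transition at an index $i\in[1,m-1]$ is free. A loop $[aaa]$ can only be assigned to a position $i$ with $P_i=aa$, that is, $a_i=a_{i+1}$, so such a position is not a transition. Hence the number of free positions is at least (number of transitions) minus $[\text{a transition occurs at }0]$. Since $|\operatorname{free}(P)|=1$, this gives at most two transitions in total, with equality only if one of them is at the root. Combined with the lower bound of two, $P$ has exactly two transitions, one at the root and the other at some $i_0\in[1,m-1]$. Moreover, that single transition $i_0$ must account for all of $\operatorname{free}(P)$.

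Third, I derive the consequences. Two cyclic transitions mean the cyclic sequence $a_0,\dots,a_{m-1}$ consists of two constant runs. Hence exactly two distinct letters occur. Every internal position other than $i_0$ is a constant position and is not free, so it receives an assigned loop. This covers every $i$ with $a_i=a_{i+1}$ and $0<i<m$. The remaining case is $i=0$: the root is a transition, so $a_0\ne a_1$, and no constant position arises there. Hence every position with $a_i=a_{i+1}$ receives a loop.

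The main obstacle is bookkeeping rather than mathematics. I need to be careful that transitions are counted cyclically over $[0,m-1]$ using $a_m=a_0$, and that "internal" means $0<i<m$, so the root index is excluded from $\operatorname{free}(P)$. Otherwise the argument is a direct count.
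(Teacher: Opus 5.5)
Your proof is correct and uses essentially the same counting argument as the paper. The paper counts the other way: the $e(P)-2$ assigned loops must occupy distinct constant positions, which forces at most two transitions, and the root cannot carry a loop, so it is a transition. You instead place the non-root transitions into the single free position, which is the same partition of internal positions viewed from the other side.
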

\begin{proof}
	Let
	\[\eta_P=\bigl|\{0\le i<e(P):a_i\ne a_{i+1}\}\bigr|.\]
	Since $a_{e(P)}=a_0$ and the sequence $a_0,\ldots,a_{e(P)}$ is~nonconstant, we~have $\eta_P\ge2$.
	Since $\delta_P=0$, exactly $e(P)-2$ loops are assigned to~$P$.
	These loops occupy distinct constant positions, so~$e(P)-2\le e(P)-\eta_P$ and hence $\eta_P\le2$.
	Thus $\eta_P=2$, and every constant position receives an~assigned loop.
	The root cannot be~assigned a~loop, so~it~is~a~transition.
\end{proof}

Although zero-budget trails cannot pay for other pieces, they have a~particularly simple joint structure in~$\mathcal M$.
We~group them into the following objects.
For an~unordered pair $\{a,b\} \subseteq \Sigma$ with $a\ne b$, set
\[Z_{a,b}=\left\{P\in C^*:P\text{ is closed},\ \delta_P=0,\ \{a_i:0\le i<e(P)\}=\{a,b\}\right\},\]
and call every nonempty $Z_{a,b}$ a~\emph{bundle}.
For a~trail $P$, let $V(P)$ be~the set of~vertices of~$\mathcal G$ visited by~$P$, and put $V(Z)=\bigcup_{P\in Z}V(P)$ for a~bundle $Z$.

By~\Cref{lem:zero-delta-shape}, every trail in~$Z_{a,b}$ is~rooted at~$ab$ or~$ba$.
At~most one trail has either root, because two closed trails with the same root could be~concatenated; hence, $|Z_{a,b}|\le2$.
Also, distinct bundles have disjoint vertex sets.

More importantly, the edges of~$\mathcal M$ corresponding to~a~bundle $Z_{a,b}$ belong to~$\{a\to b,b\to a\}$, with at~most one edge in~each direction, and its assigned loops give only the edges $a\to a$ and $b\to b$.
Thus an~entire bundle, together with its assigned loops, can be~covered by~a~single trail in~$\mathcal M$.

\subsection{Local covers}

Since a~bundle has no~budget of~its own, we~either leave it~as~the sole nontrivial piece of~its component or~assign it~to~a~connector and include it~in~the corresponding local cover.

Call a~trail $P\in C^*$ a~\emph{connector}\footnote{Each trail $P\in C$ is~either a~loop, an~element of a~bundle, or~a~connector.} if~$P$ is~open or~$\delta_P>0$.
We~now incorporate each bundle that lies in~a~component containing a~connector into~a~trail cover of~$\mathcal M$.
Loops do~not help to connect distinct trails, and distinct bundles do~not intersect.
Thus~every bundle either forms an~entire component together with its assigned loops or~shares a~vertex with a~connector; the only component containing neither is~an~isolated loop.
A~component with no~connector therefore can be~covered by~one trail in~$\mathcal M$.

We~henceforth focus on~components containing connectors.
A~bundle that shares an~endpoint with a~connector will be~inserted there (note that no~two bundles can contain the same endpoint). 
Each remaining bundle $Z$ will be~assigned to~a~free internal position $P_i$ of~a~connector $P$ satisfying $P_i\in V(Z)$.

First, suppose that a~bundle $Z$ and a~connector $P$ share the endpoint $ab$.
Then $a\ne b$, since otherwise $[aaa]$ could concatenate with $P$.
For the same reason, $Z$ does not contain a~trail rooted at~$ab$.
Thus $Z=Z_{a,b}$, and since it~is~nonempty, it~consists of~one trail $Q$ rooted at~$ba$.
In~$\mathcal M$, the edge $e_Q=b\to a$, together with the loops assigned to~$Q$, can be~prepended to~$e_P$ if~$ab$ is~the first vertex of~an~open $P$, and appended otherwise.
In~either case, the result is~a~trail in~$\mathcal M$.

Now consider a~bundle $Z$ that shares no~endpoint with any connector.
Choose a~connector $P$ and an~occurrence $P_i\in V(Z)$.
Then $0<i<e(P)$.
Moreover, $i\in\operatorname{free}(P)$: if~$P_i=ab$ with $a\ne b$, no~loop can be~assigned there, whereas if~$P_i=aa$, the unique loop $[aaa]$ is~assigned to~a~trail in~$Z$ and not to~$P$.
The chosen positions are distinct because distinct bundles have disjoint vertex sets.
For each connector $P$, let $B_P\subseteq\operatorname{free}(P)$ be~the set of~positions assigned to~bundles.
Thus every bundle has either been inserted at~an~endpoint or~assigned to~one position in~$B_P$.

Bundles are not the only pieces that can cost an~additional trail in~$\mathcal M$.
A~loop may be~nonisolated in~$\mathcal G$ but become the only edge of~its weak component in~$\mathcal M$.
Such a~loop cannot be~inserted into another trail of~$\mathcal M$, so~we~also include it~in~the local cover of~the connector to~which it~was assigned.
For each connector $P$, let $\lambda_P$ be~the number of~isolated loops assigned to~$P$.
We~first show that the positions adjacent to~an~isolated loop cannot be~assigned bundles.

\begin{lemma}\label{lem:isolated-loop-charging}
	Fix a~connector $P$ and an~isolated loop $[aaa]$, counted by $\lambda_P$, and let $P_i=aa$ be~its assigned occurrence.
	Since $0<i<e(P)$, let $b=a_{i-1}$ and~$c=a_{i+2}$ be~the letters immediately before and after this occurrence.
	Then $b,c\ne a$, and every bundle $Z$ satisfies
	\[\{ba,ac\}\cap V(Z)=\varnothing.\]
\end{lemma}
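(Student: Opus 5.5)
The plan is to translate ``isolated in~$\mathcal M$'' into a statement about letters and then check the two claims separately. The loop $[aaa]$ contributes the edge $a\to a$ to~$\mathcal M$. It is the only edge of its weak component in~$\mathcal M$ exactly when the letter $a$ is not an endpoint of~$e_Q$ for any other trail $Q\in C$. Equivalently, for every $Q\in C$ other than $[aaa]$, written with letters $a_0,\ldots,a_{e(Q)+1}$, we have $a\notin\{a_0,a_{e(Q)+1}\}$. I will use only this reformulation together with \Cref{lem:zero-delta-shape}.

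First I show $b,c\ne a$. We have $P_{i-1}=a_{i-1}a_i=ba$ and $P_i=aa$. If $b=a$, then the edge of~$P$ from $P_{i-1}$ to~$P_i$ is $aa\to aa$, i.e.\ the string $aaa$. But that string is already the edge of the separate trail $[aaa]$, and the trails of~$C$ are edge-disjoint, so this is impossible. Symmetrically, $c=a$ would make the edge $P_i\to P_{i+1}$ equal to $aaa$, again a contradiction. Both neighbours exist because $0<i<e(P)$, as already guaranteed by the loop assignment.

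Next I show that no bundle visits $ba$ or~$ac$. Fix a bundle $Z=Z_{x,y}$ with $x\ne y$, and let $Q\in Z$. By \Cref{lem:zero-delta-shape}, every vertex of~$Q$ is a two-letter string over~$\{x,y\}$, so every vertex in $V(Z)$ uses only the letters $x,y$. Since $b\ne a$, the vertex $ba$ contains the letter~$a$, and likewise $ac$ does. So if $ba$ or $ac$ lay in $V(Z)$, we would get $a\in\{x,y\}$. On the other hand, $Q$ is closed and rooted at $xy$ or~$yx$. Say the root is $xy$; then $a_0=x$, and closedness gives $a_{e(Q)}a_{e(Q)+1}=xy$, so $e_Q$ is the edge $x\to y$ (or $y\to x$ in the other case). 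Thus both $x$ and~$y$ are endpoints of a non-loop edge of~$\mathcal M$. Together with $a\in\{x,y\}$, this would make $a$ incident to~$e_Q$, contradicting the isolation of the loop at~$a$ in~$\mathcal M$.

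The only step needing care is the computation of~$e_Q$ for a bundle trail: one must confirm that its endpoints are the two distinct letters $x,y$. This follows directly from the root being a transition (\Cref{lem:zero-delta-shape}) and from closedness. I expect no further obstacle.
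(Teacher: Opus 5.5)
Your proof is correct and follows the paper's argument: edge-disjointness of the trails in $C$ rules out $b=a$ and $c=a$, and a bundle visiting $ba$ or $ac$ would have $a$ in its defining pair. Its edge $e_Q$ (which you correctly compute as $x\to y$ or $y\to x$, using that the root is a transition) would then be incident to $a$ in $\mathcal M$, contradicting isolation; the explicit computation of $e_Q$ just spells out a step the paper leaves implicit.
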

\begin{proof}
	If~$b=a$ or~$c=a$, then $P$ contains the edge $[aaa]$, contradicting that this edge belongs to~the distinct trail $[aaa]\in C$.
	Hence $b,c\ne a$.
	Since $e_{[aaa]}  \colon a\to a$ is~an~isolated component of~$\mathcal M$, no~bundle edge is~incident to~$a$.
	If~a~bundle $Z$ visits $ba$ or~$ac$, then the unordered pair defining $Z$ would contain $a$, so~some bundle edge $e_Q$ would be~incident to~$a$.
	This is~a~contradiction, and hence $\{ba,ac\}\cap V(Z)=\varnothing$.
\end{proof}

We~can now show that the bundles and isolated loops fit into the available budget of~their connector.
\begin{lemma}\label{lem:connector-budget}
	Every connector $P$ satisfies
	\[|B_P|+\lambda_P\le\delta_P+1.\]
	Moreover, if~$P$ is~closed, then $\lambda_P\le\delta_P$.
\end{lemma}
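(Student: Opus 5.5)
We prove both inequalities by a charging argument along the positions $1,\dots,e(P)-1$ of~$P$. Each assigned loop sits at a constant position, and \Cref{lem:isolated-loop-charging} tells us that every isolated loop is surrounded by positions that no bundle can use. Fix a connector $P$. Let $L_P\subseteq\{1,\dots,e(P)-1\}$ be the set of positions $i$ with $P_i=aa$ to which an isolated loop $[aaa]$ is assigned, so that $|L_P|=\lambda_P$. These positions are pairwise distinct, and they are not in $\operatorname{free}(P)$ because a loop is assigned there. The positions of $B_P$, by contrast, lie in $\operatorname{free}(P)$. The aim is to find an injection from $L_P$ into $\operatorname{free}(P)\setminus B_P$ that misses at most one element of~$L_P$ in general, and misses none when $P$ is closed.

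\textbf{Step 1: local structure around an isolated loop.} Take $i\in L_P$ with $P_i=aa$. By \Cref{lem:isolated-loop-charging}, $a_{i-1}=b\ne a$ and $a_{i+2}=c\ne a$. Hence $P_{i-1}=ba$ and $P_{i+1}=ac$ are transitions, so no loop can be assigned to them. Whenever they are internal, they are therefore in $\operatorname{free}(P)$. The same lemma says no bundle visits $ba$ or $ac$, so neither position can lie in $B_P$. Two consequences follow: no two elements of $L_P$ are adjacent, and each $i\in L_P$ has two neighbouring positions $i\pm1$ that lie in $\operatorname{free}(P)\setminus B_P$, provided they are internal.

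\textbf{Step 2: the open case, and the general bound.} Map each $i\in L_P$ to $i+1$. This map is injective, and its image lies in $\operatorname{free}(P)\setminus B_P$ unless $i+1=e(P)$. Only the single position $i=e(P)-1$ can fail, so
\[
\lambda_P-1\le|\operatorname{free}(P)|-|B_P|.
\]
This is $|B_P|+\lambda_P\le\delta_P+1$ for open $P$. For closed $P$ the same count only gives $|B_P|+\lambda_P\le\delta_P+2$, so there we must save one more unit.

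\textbf{Step 3: the closed case (main obstacle).} View the positions cyclically modulo $e(P)$, with the root at $0$; the root is a transition and is not internal. Isolation of $[aaa]$ in $\mathcal M$ is the extra ingredient. The edge $e_P\colon a_0\to a_0$ is an edge of $\mathcal M$ that is not $[aaa]$, so $a\ne a_0$. Hence no loop position is adjacent to the root through a constant stretch, and the maps $i\mapsto i+1$ and $i\mapsto i-1$ can be combined. Order $L_P$ as $i_1<\dots<i_\lambda$. Send $i_j\mapsto i_j-1$; this lands in $\operatorname{free}(P)\setminus B_P$ since $i_1-1\ge0$, and $i_1-1=0$ is impossible only if we can show it. If $i_1=1$, use $i_j\mapsto i_j+1$ instead, and handle $i_\lambda=e(P)-1$ symmetrically. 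The difficult configuration is when both $i_1=1$ and $i_\lambda=e(P)-1$, so that both ends sit next to the root. Then $a_1=a_2=a$, $a_{e(P)-1}=a_{e(P)}=a'$ and $a_0=a_{e(P)}=a'$. So $a'$ is itself the letter of an isolated loop, yet it is an endpoint letter of $e_P$, which contradicts isolation. Excluding this configuration gives an injection $L_P\to\operatorname{free}(P)\setminus B_P$. Hence $|B_P|+\lambda_P\le|\operatorname{free}(P)|=\delta_P+1$, which is the first claim for closed $P$. Dropping $B_P$ and using the extra slack from the root (at least one transition of $P$ besides the root lies in $\operatorname{free}(P)$ and is not an image) gives $\lambda_P\le|\operatorname{free}(P)|-1=\delta_P$.

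I expect the closed case to be the delicate part. The injectivity of the mixed map must be checked carefully when the left neighbour of one loop is the right neighbour of another, in patterns like $\ldots aa\,ac\,cc\ldots$. My first attempt would be to charge greedily from left to right, always using the right neighbour except for a loop at $e(P)-1$. That loop would use its left neighbour, and would need to shift the whole chain if that left neighbour is already taken. Termination at a non-image free transition would again come from the root argument above.
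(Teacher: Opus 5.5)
Your proof of the first inequality is correct and is essentially the paper's argument. Each isolated loop at $P_i=aa$ is charged to the transition $P_{i+1}=ac$, which is free and outside $B_P$ by \Cref{lem:isolated-loop-charging}, and only $i=e(P)-1$ can fail.

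Your closed case, however, misidentifies $e_P$. For a closed trail, $a_{e(P)}=a_0$ and $a_{e(P)+1}=a_1$, so $e_P\colon a_0\to a_1$, not $a_0\to a_0$. Also, the root of a connector need not be a transition: the trail spelling $aabaa$ has a constant root and $\delta_P=1$. Both root letters are endpoints of $e_P$, so neither $i=1$ nor $i=e(P)-1$ can carry an isolated loop. Hence the uniform shift $i\mapsto i+1$ already injects $L_P$ into $\operatorname{free}(P)\setminus B_P$ for closed $P$. Your case split is unnecessary, and so is the worry about mixed maps on patterns like $aa\,ac\,cc$. The case split is still valid, because it only uses that $a_0$ is an endpoint of $e_P$.

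The genuine gap is in the \emph{moreover} part. You assert that some transition of $P$ other than the root lies in $\operatorname{free}(P)$ and is not an image of your injection, but you give no reason. ``The extra slack from the root'' cannot supply one: the root is not internal, so it contributes nothing to $\operatorname{free}(P)$. A non-root free transition certainly exists, but a priori every such transition could sit next to an isolated loop and be an image.

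The missing idea is to choose a transition that carries a root letter on the relevant side. For the right shift, let $t\ge1$ be the first transition after the root.
\begin{itemize}
\item $t\le e(P)-1$, because $P$ has at least two transitions and at most one at the root.
\item $t\in\operatorname{free}(P)$, because no loop is assigned to a transition.
\item Positions $1,\dots,t-1$ are constant, so $a_t=a_1$. This is an endpoint of $e_P$ and hence not the letter of an isolated loop.
\item Every image $i+1$ has first letter $a_{i+1}=a$ with $[aaa]$ isolated, so $t$ is not an image.
\end{itemize}
Thus $|\operatorname{free}(P)|\ge\lambda_P+1$, i.e.\ $\lambda_P\le\delta_P$; this is exactly the paper's argument. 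For your left-shift case, the analogous witness is the last transition before the root, whose second letter is $a_{e(P)}=a_0$.
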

\begin{proof}
	For each isolated loop $[aaa]$ counted by~$\lambda_P$, let $j(a)$ denote~the index of~the pair vertex $ac$, where~$c$ is~the letter following~the occurrence $aa$ in~$P$.
	By~\Cref{lem:isolated-loop-charging}, $c\ne a$ and no~bundle visits~$ac$.
	Hence no~loop is~assigned to~the position $j(a)$, and $j(a)\notin B_P$.
	Since $e_{[aaa]}$ is~an~isolated component of~$\mathcal M$, the edge $e_P$ is~not incident with $a$; hence $j(a)$ is~not the root when $P$ is~closed.
	Moreover,
	\[j(a)\in\begin{cases}\operatorname{free}(P)\cup\{e(P)\},&P\text{ is open},\\\operatorname{free}(P),&P\text{ is closed},\end{cases}\]
	and $|\{a:j(a)=e(P)\}|\le1$.
	Counting the indices in~$\operatorname{free}(P)\setminus B_P$ yields
	\[
        |B_P|+\max\{\lambda_P-1,0\}\le\delta_P\quad(P\text{ is open}),\qquad |B_P|+\lambda_P\le\delta_P+1\quad(P\text{ is closed}).
    \]
	Both cases imply $|B_P|+\lambda_P\le\delta_P+1$.
	For closed $P$, the first transition encountered after the root is~a~free position distinct from every $j(a)$, because its preceding letter occurs in~the root and hence is~not an~isolated loop; therefore $|\operatorname{free}(P)|\ge\lambda_P+1$, or~equivalently, $\lambda_P\le\delta_P$.
\end{proof}

We~are ready to construct the local pieces of~a~trail cover of~$\mathcal M$.
For a~connector $P$, we build a~subgraph of $\mathcal{M}$ around it as follows. Start with the edge $e_P$, add all edges from the bundles inserted at~the endpoints of~$P$ or~assigned to~a~position in~$B_P$, include the loops assigned to those bundles, and finally add the isolated loops assigned to~$P$. Denote the resulting subgraph of~$\mathcal M$ by~$\mathcal D_P$.

\begin{lemma}\label{lem:local-connector-cover}
	The graph $\mathcal D_P$ has a~trail cover of~size at~most
	\[\begin{cases}2\delta_P+2,&P\text{ is open},\\2\delta_P,&P\text{ is closed}.\end{cases}\]
\end{lemma}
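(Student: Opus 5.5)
The bound follows from a crude count: cover $\mathcal D_P$ by one trail per piece, then apply \Cref{lem:connector-budget}. The pieces of $\mathcal D_P$ are the following.
\begin{itemize}
	\item The edge $e_P$, together with the (at most two) bundles inserted at the endpoints of $P$. By the discussion preceding the local covers, each such bundle consists of a single trail $Q$ whose edge $e_Q$, followed by the loops assigned to $Q$, can be prepended or appended to $e_P$. This gives one trail in $\mathcal M$.
	\item The bundles assigned to the positions in $B_P$. Each bundle $Z$, together with its assigned loops, is covered by a single trail in $\mathcal M$, as shown earlier: its edges lie in $\{a\to b,b\to a\}$ with at most one edge in each direction, plus the loops $a\to a$ and $b\to b$.
	\item The $\lambda_P$ isolated loops assigned to $P$, each a single edge and hence a trail.
\end{itemize}
Concatenating trails where possible is unnecessary. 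The separate pieces give a cover of size at most
\[1+|B_P|+\lambda_P .\]

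For open $P$, \Cref{lem:connector-budget} gives $|B_P|+\lambda_P\le\delta_P+1$. Hence the cover has size at most
\[\delta_P+2\le 2\delta_P+2,\]
since $\delta_P\ge0$.

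For closed $P$ we need at most $2\delta_P$ pieces, and here the constant $1$ matters. A closed connector has $\delta_P\ge1$ by definition, so it suffices to show
\[1+|B_P|+\lambda_P\le 2\delta_P .\]
By \Cref{lem:connector-budget}, $1+|B_P|+\lambda_P\le\delta_P+2$, and $\delta_P+2\le 2\delta_P$ exactly when $\delta_P\ge2$. So only the case $\delta_P=1$ needs care.

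In that case we need $|B_P|+\lambda_P\le1$, while the general bound from \Cref{lem:connector-budget} only gives $|B_P|+\lambda_P\le2$. Refining the count from the proof of \Cref{lem:connector-budget}, with $|\operatorname{free}(P)|=2$, I split into the following cases.
\begin{itemize}
	\item \emph{$\lambda_P\ge1$.} The first free transition after the root and the position $j(a)$ are distinct free positions outside $B_P$. This forces $B_P=\varnothing$ and $\lambda_P=1$.
	\item \emph{$\lambda_P=0$ and $|B_P|=2$.} Two bundles sit at the only two free positions. I would handle this by merging pieces instead of counting them: a bundle sharing a vertex with $P$ shares a letter with $e_P$, or both bundles share letters with each other. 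The aim is to argue that $e_P\colon a_0\to a_0$ is a loop at $a_0$, and that a bundle visiting a vertex adjacent in $P$ to the root contains $a_0$, so its trail can be spliced with $e_P$ into one trail. Concretely, the first free transition after the root involves $a_0$ or a letter that reaches it.
\end{itemize}

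The main obstacle is this last splicing step, namely showing that in the tight case $\delta_P=1$ some bundle's $\mathcal M$-trail passes through $a_0$ and can absorb the loop $e_P$. I would try first to show that at least one bundle position in $B_P$ is adjacent to the root transition, so that the bundle's letter pair contains $a_0$.
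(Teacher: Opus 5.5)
The open case, the closed case with $\delta_P\ge2$, and the subcase $\lambda_P\ge1$ (modulo one justification) are handled as in the paper. The remaining subcase $\lambda_P=0$, $|B_P|=2$ of closed $P$ with $\delta_P=1$ is not proved, and this is a genuine gap. Your reduction to it is correct: one trail per piece gives $1+|B_P|+\lambda_P\le\delta_P+2$. The plan you sketch for this subcase rests on a false premise. For closed $P$ we have $a_{e(P)}=a_0$ and $a_{e(P)+1}=a_1$, so $e_P\colon a_0\to a_1$, which is a loop only when the root is constant. In this subcase the root is never constant:
if it were, every transition would be free. Then $\operatorname{free}(P)$ would consist of exactly two transitions $ab$ and $ba$, both in $V(Z_{a,b})$, and at most one of them could lie in $B_P$. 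Likewise $P$ cannot have just the two transitions $ab$ (at the root) and $ba$. A bundle visiting $ba$ also visits the root, so it was inserted at the endpoint rather than counted in $B_P$. Hence $P$ has exactly three transitions: the root and the two positions of $B_P$. They involve three distinct letters $x,y,z$, with $e_P\colon x\to y$ and the two bundles being $Z_{y,z}$ and $Z_{z,x}$.

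The missing step is that $\mathcal D_P$ is then weakly connected with total positive imbalance at most two, so it has a trail cover of size two. Absorbing $e_P$ into a bundle trail is not the right mechanism in general. Take the input $\{xyz,yzx,zxy,zyz,yzy,xzx,zxz\}$, for which the overlap-two phase may output $xyzxy$, $zyzy$, $xzxz$. Then $\mathcal D_P$ consists of $x\to y$, $z\to y$, $x\to z$. Here $e_P$ can be joined with neither bundle edge, and the only two-trail cover is $\{x\to z\to y,\ x\to y\}$, in which the two bundles merge with each other.

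A smaller point: in the subcase $\lambda_P\ge1$ you assert without justification that the first transition after the root lies outside $B_P$. It does, for the following reason. The root contains no $a$, so the transitions $ba$ and $ac$ flanking the assigned occurrence $P_i=aa$ are two distinct free positions. Since $|\operatorname{free}(P)|=2$, they exhaust $\operatorname{free}(P)$. By \Cref{lem:isolated-loop-charging} no bundle visits either of them, so $B_P=\varnothing$.
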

\begin{proof}
	After the endpoint insertions, one trail contains $e_P$ and all the bundles inserted at~the endpoints of~$P$.
	Each position in~$B_P$ corresponds to~one bundle, which forms one trail together with its assigned loops.
	Finally, each of~the $\lambda_P$ isolated loops contributes one trail.
	Hence
	\[\tau(\mathcal D_P)\le1+|B_P|+\lambda_P\le\delta_P+2.\]
	This proves the claim for open $P$, and for closed $P$ with $\delta_P\ge2$.
	It~remains to~consider a~closed $P$ with $\delta_P=1$.
	Then $|\operatorname{free}(P)|=2$ and $\lambda_P\le1$.

	Suppose first that $\lambda_P=1$, and let $[aaa]$ be~the isolated loop assigned to~$P$.
	Since the assigned occurrence is~$P_i=aa$ with $0<i<e(P)$, let $b=a_{i-1}$ and~$c=a_{i+2}$ be~the letters immediately before and after it.
	Since this loop is~isolated from~$e_P$, the root contains no~$a$.
	Hence the two boundary transitions $ba$ and~$ac$ are distinct positions of~$\operatorname{free}(P)$, and by~\Cref{lem:isolated-loop-charging}, no~bundle visits either of~them.
	They exhaust $\operatorname{free}(P)$, so~$B_P=\varnothing$ and $\tau(\mathcal D_P)\le2$.

	Now suppose that $\lambda_P=0$.
	If~$|B_P|\le1$, the trivial cover has size at~most two, so~assume that $|B_P|=2$.
	Every transition of~$P$ other than the root belongs to~$\operatorname{free}(P)$, so~$P$ has at~most three transitions.
	Since $P$ has at~least two transitions, it~has either two or~three.
	Suppose that it~has exactly two, say $ab$ and~$ba$.
	If~the root is~the transition $ab$, any bundle containing $ba$ also contains the root and was already inserted there, so~the position of~$ba$ does not belong to~$B_P$.
	If~the root is~constant, the vertices $ab$ and $ba$ both belong to~$V(Z_{a,b})$, so~at~most one of~their positions belongs to~$B_P$.
	Thus $|B_P|\le1$, contradicting $|B_P|=2$.
	Hence $P$ has exactly three transitions: the root and the two positions in~$B_P$.

	They use three distinct letters and form the three sides of~a~triangle in~$\mathcal D_P$.
	Hence $\mathcal D_P$ is~weakly connected with total positive imbalance at~most two, and the construction from~\Cref{lem:pair-component-accounting} gives~$\tau(\mathcal D_P)\le2$.
\end{proof}

\subsection{Combining the local covers}

The previous subsection covers the edges that are attached to individual connectors.
We~now sum those local covers component by~component in~$\mathcal{G}$ and insert any remaining nonisolated loops for free.

\begin{lemma}\label{lem:M-cover}
	The graph $\mathcal M$ satisfies
	\[\tau(\mathcal M)\le 2\sigma+2\tau(\mathcal G).\]
\end{lemma}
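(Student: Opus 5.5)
We work component by component of $\mathcal G$ and use subadditivity: if the edge set of $\mathcal M$ is split into edge-disjoint subgraphs, each with its own trail cover, then the union of those covers is a trail cover of $\mathcal M$. So it suffices to show that the edges of $\mathcal M$ coming from a single weak component $H_j$ of $\mathcal G$ can be covered by at most $2\sum_{P}\delta_P+2\max\{\beta_j,1\}$ trails, where the sum ranges over the connectors $P$ in $H_j$. Summing over $j$ and applying \Cref{lem:pair-component-accounting} then gives $\tau(\mathcal M)\le 2\sigma+2\tau(\mathcal G)$, since budgets are nonnegative and non-connector trails contribute nothing to $\sigma$.

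First take a component with no connector. As argued before the statement, it is either an isolated loop or a single bundle together with its assigned loops, and both are covered by one trail in $\mathcal M$. That uses $1\le 2\max\{\beta_j,1\}$ trails.

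Now take a component containing connectors. Every edge of $\mathcal M$ coming from this component is one of the following:
\begin{itemize}
\item $e_P$ for a connector $P$;
\item an edge of a bundle, which is either inserted at an endpoint of a connector or assigned to a position in some $B_P$;
\item a loop assigned to a bundle;
\item an isolated loop assigned to a connector;
\item a remaining loop, i.e.\ a nonisolated loop of $\mathcal G$ assigned to a connector whose edge $a\to a$ is not an isolated component of $\mathcal M$.
\end{itemize}
All but the last kind lie in $\bigcup_P \mathcal D_P$. The remaining loops cost nothing: such a loop $a\to a$ shares the vertex $a$ with some other edge of $\mathcal M$, which is already on some trail of the cover. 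We splice the loop into that trail at a visit to $a$, and a closed walk of length one can always be inserted without breaking the trail property. Hence the component needs at most $\sum_P \tau(\mathcal D_P)$ trails. By \Cref{lem:local-connector-cover}, this is at most $\sum_P 2\delta_P + 2\cdot\#\{\text{open connectors in } H_j\}$. By \Cref{lem:pair-component-accounting}, the number of open connectors is exactly $\beta_j$, which finishes the case when $\beta_j\ge1$.

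The main obstacle is a component containing connectors but with $\beta_j=0$, where every connector is closed. Here we are allowed an additive $2$ rather than $0$, so the bound $\sum 2\delta_P \le 2\sum\delta_P+2$ still holds trivially. So the slack is enough, and the only delicate points are the ones already handled: each bundle is charged to exactly one connector (endpoint insertion or a single position in $B_P$), and the $\mathcal D_P$ are edge-disjoint.

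Edge-disjointness needs care in two places. A bundle that shares endpoints with two connectors must be inserted at only one of them; we fix one choice arbitrarily. Each loop must be counted in exactly one place, which follows because every loop is assigned to exactly one occurrence. Once these bookkeeping points are checked, summing over components gives the lemma.
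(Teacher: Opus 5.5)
Your proposal is correct and follows essentially the same route as the paper: sum over the weak components $H_j$ the local covers $\mathcal D_P$ from \Cref{lem:local-connector-cover}, use a one-trail cover for connector-free components and the count of exactly $\beta_j$ open connectors from \Cref{lem:pair-component-accounting}, and splice the leftover non-isolated loops in for free. One small imprecision: the edge of $\mathcal M$ at $a$ used to splice a leftover loop may come from a different weak component of $\mathcal G$, so the splicing should be done after the component covers are combined, as the paper does, rather than inside your per-component bound; this does not change the final count.
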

\begin{proof}
	For $1\le j\le\nu$, set
	\[\sigma_j=\sum_{\substack{P\in C^*,\\E(P)\subseteq E(H_j)}}\delta_P.\]
	By~\Cref{lem:local-connector-cover}, each connector $P$ contributes at~most $2\delta_P$ trails, and an~open connector contributes two additional trails.
	By~\Cref{lem:pair-component-accounting}, the component $H_j$ contains exactly $\beta_j$ open connectors.
	Hence, if~$H_j$ contains a~connector, the subgraphs $\mathcal D_P$ associated with its connectors together have a~trail cover of~size at~most $2\sigma_j+2\beta_j$.

	If~$H_j$ contains no~connector, then~it~consists of~one bundle together with its assigned loops, or~of~one isolated loop; in~either case, the corresponding edges of~$\mathcal M$ have a~one-trail cover.
	Thus, in~all cases, the local pieces arising from~$H_j$ can be~covered by~at~most $2\sigma_j+2\max\{\beta_j,1\}$ trails.

	The only uncovered edges are loops $a\to a$ whose weak components in~$\mathcal M$ contain another edge.
	For each such loop, some already covered edge is~incident to~$a$, so~one of~the trails in~the current cover visits $a$.
	We~can insert the loop at~that visit without increasing the number of~trails.

	Summing over all the components and applying~\Cref{lem:pair-component-accounting} gives
	\[\tau(\mathcal M)\le2\sum_{j=1}^{\nu}\sigma_j+2\sum_{j=1}^{\nu}\max\{\beta_j,1\}=2\sigma+2\tau(\mathcal G).\qedhere\]
\end{proof}

\begin{proof}[Proof of~\Cref{thm:rho3-upper}]
	Substituting the upper bound on~$2\sigma$ from~\Cref{lem:trail-accounting} into~\Cref{lem:M-cover} eliminates $\sigma$ and gives
	\[4\kappa_2+\tau(\mathcal M)\le2n+4\tau(\mathcal G).\]
	Rearranging and applying~\Cref{lem:terminal-packing} gives
	\[\kappa_1\le n+2\tau(\mathcal G)-\frac32\kappa_2.\]
	To~cancel the $\kappa_2$ term, we~combine four fifths of~this bound with one fifth of~the trivial bound $\kappa_1\le\kappa_2$:
	\[\kappa_1+\kappa_2=\frac45\kappa_1+\frac15\kappa_1+\kappa_2\le\frac45\left(n+2\tau(\mathcal G)-\frac32\kappa_2\right)+\frac15\kappa_2+\kappa_2=\frac45n+\frac85\tau(\mathcal G).\]
	Together with~\Cref{lem:opt-lower}, this yields
	\[
        \frac{|g|}{\operatorname{OPT}(S)}\le\frac{9n+8\tau(\mathcal G)}{5(n+\tau(\mathcal G)+1)}=\frac95-\frac{\tau(\mathcal G)+9}{5(n+\tau(\mathcal G)+1)}<\frac95.\qedhere
    \]
\end{proof}

%% file: refs.bib
@article{Bruijn46,
  author  = {Nicolaas Govert de Bruijn},
  title   = {{A Combinatorial Problem}},
  journal = {Proceedings of the Section of Sciences of the Koninklijke Nederlandse Akademie van Wetenschappen te Amsterdam},
  volume  = {49},
  number  = {7},
  pages   = {758--764},
  year    = {1946}
}

@article{Graham66,
  author  = {Ronald L. Graham},
  title   = {{Bounds for Certain Multiprocessing Anomalies}},
  journal = {Bell System Technical Journal},
  volume  = {45},
  number  = {9},
  pages   = {1563--1581},
  year    = {1966},
  doi     = {10.1002/j.1538-7305.1966.tb01709.x}
}

@inproceedings{Johnson73,
  author       = {David S. Johnson},
  title        = {Approximation Algorithms for Combinatorial Problems},
  booktitle    = {{STOC}},
  pages        = {38--49},
  publisher    = {{ACM}},
  year         = {1973}
}

@article{GMS80,
  author       = {John Gallant and
                  David Maier and
                  James A. Storer},
  title        = {On Finding Minimal Length Superstrings},
  journal      = {J. Comput. Syst. Sci.},
  volume       = {20},
  number       = {1},
  pages        = {50--58},
  year         = {1980}
}

@book{Storer88,
  author    = {James A. Storer},
  title     = {{Data Compression: Methods and Theory}},
  publisher = {Computer Science Press},
  year      = {1988}
}

@article{TU88,
  author  = {Jorma Tarhio and Esko Ukkonen},
  title   = {{A Greedy Approximation Algorithm for Constructing Shortest Common Superstrings}},
  journal = {Theor. Comput. Sci.},
  volume  = {57},
  number  = {1},
  pages   = {131--145},
  year    = {1988},
  doi     = {10.1016/0304-3975(88)90167-3}
}

@article{Turner89,
  author  = {Jonathan S. Turner},
  title   = {{Approximation Algorithms for the Shortest Common Superstring Problem}},
  journal = {Inf. Comput.},
  volume  = {83},
  number  = {1},
  pages   = {1--20},
  year    = {1989},
  doi     = {10.1016/0890-5401(89)90044-8}
}

@article{Pevzner89,
  title={l-tuple DNA sequencing: computer analysis},
  author={Pevzner, Pavel A},
  journal={Journal of Biomolecular structure and dynamics},
  volume={7},
  number={1},
  pages={63--73},
  year={1989},
  publisher={Taylor \& Francis}
}

@article{Ukkonen90,
  author  = {Esko Ukkonen},
  title   = {{A Linear-Time Algorithm for Finding Approximate Shortest Common Superstrings}},
  journal = {Algorithmica},
  volume  = {5},
  number  = {3},
  pages   = {313--323},
  year    = {1990},
  doi     = {10.1007/BF01840391}
}

@inproceedings{BJLTY91,
  author       = {Avrim Blum and
                  Tao Jiang and
                  Ming Li and
                  John Tromp and
                  Mihalis Yannakakis},
  title        = {Linear Approximation of Shortest Superstrings},
  booktitle    = {{STOC}},
  pages        = {328--336},
  publisher    = {{ACM}},
  year         = {1991}
}

@book{Vazirani01,
  author    = {Vijay V. Vazirani},
  title     = {{Approximation Algorithms}},
  publisher = {Springer},
  year      = {2001}
}

@inproceedings{RBT04,
  author       = {Heidi J. Romero and
                  Carlos A. Brizuela and
                  Andrei Tchernykh},
  title        = {An Experimental Comparison of Approximation Algorithms for the Shortest
                  Common Superstring Problem},
  booktitle    = {{ENC}},
  pages        = {27--34},
  publisher    = {{IEEE} Computer Society},
  year         = {2004}
}

@article{Ma09,
  author  = {Bin Ma},
  title   = {{Why Greed Works for Shortest Common Superstring Problem}},
  journal = {Theor. Comput. Sci.},
  volume  = {410},
  number  = {51},
  pages   = {5374--5381},
  year    = {2009},
  doi     = {10.1016/j.tcs.2009.09.014}
}

@incollection{GP14,
  author    = {Theodoros P. Gevezes and Leonidas S. Pitsoulis},
  title     = {{The Shortest Superstring Problem}},
  booktitle = {Optimization in Science and Engineering},
  pages     = {189--227},
  publisher = {Springer},
  year      = {2014},
  doi       = {10.1007/978-1-4939-0808-0_10}
}

@article{CR18,
  author       = {Bastien Cazaux and
                  Eric Rivals},
  title        = {Relationship between superstring and compression measures: New insights
                  on the greedy conjecture},
  journal      = {Discret. Appl. Math.},
  volume       = {245},
  pages        = {59--64},
  year         = {2018}
}

@inproceedings{EMV23,
  author       = {Matthias Englert and
                  Nicolaos Matsakis and
                  Pavel Vesel{\'{y}}},
  title        = {Approximation Guarantees for Shortest Superstrings: Simpler and Better},
  booktitle    = {{ISAAC}},
  series       = {LIPIcs},
  volume       = {283},
  pages        = {29:1--29:17},
  publisher    = {Schloss Dagstuhl - Leibniz-Zentrum f{\"{u}}r Informatik},
  year         = {2023}
}

@inproceedings{Nikolaev24,
  author       = {Maksim S. Nikolaev},
  title        = {Greedy Conjecture for the Shortest Common Superstring Problem and
                  Its Strengthenings},
  booktitle    = {{SPIRE}},
  series       = {Lecture Notes in Computer Science},
  volume       = {14899},
  pages        = {233--248},
  publisher    = {Springer},
  year         = {2024}
}

@inproceedings{KN24,
  author       = {Pavel E. Kalugin and
                  Maksim S. Nikolaev},
  title        = {The greedy algorithm for the Shortest Common Superstring problem is
                  a {\textonehalf}-approximation in terms of compression: a simple proof},
  booktitle    = {{SOSA}},
  pages        = {97--99},
  publisher    = {{SIAM}},
  year         = {2024}
}
